\newtheorem{theorem}{Theorem}[]
\newtheorem{lemma}[theorem]{Lemma}
\newtheorem{corollary}[theorem]{Corollary}
\newtheorem{example}{Example}
\newcommand{\myspan}{{\rm span}}
\newcommand{\ignore}[1]{}%
\newcommand{\ProblemFormat}[1]{{\sc #1}}
\newcommand{\ProblemName}[1]{\ProblemFormat{#1}\xspace}
\newcommand{\CA}{\ProblemName{Channel Assignment}}
\newcommand{\ThreeSAT}{\ProblemName{3-CNF-SAT}}
\newcommand{\TwoSAT}{\ProblemName{2-CNF-SAT}}
\newcommand{\FInt}{\ProblemName{Family Intersection}}
\newcommand{\CMW}{\ProblemName{Common Matching Weight}}
\newcommand{\heading}[1]{\medskip\noindent{\bf #1.\ }}%
\newcommand{\defproblem}[3]{
  \vspace{1mm}
\noindent\fbox{
  \begin{minipage}{0.96\textwidth}
  #1 \\
  {\bf{Input:}} #2  \\
  {\bf{Question:}} #3
  \end{minipage}
  }
}
\newcommand{\br}[1]{\left\{#1\right\}}
\newcommand{\bk}[1]{\left[#1\right]}
\newcommand{\bt}[1]{\left\langle#1\right\rangle}
\newcommand{\p}[1]{\left(#1\right)}
\newcommand{\N}{\mathbb{N}}
\newcommand{\Z}{\mathbb{Z}}
\newcommand{\of}[1]{\p{#1}}
\newcommand{\set}[1]{\br{#1}}
\newcommand{\abs}[1]{\left|#1\right|}
\newcommand{\pow}[1]{\left|#1\right|}
\newcommand{\el}[1]{\bk{#1}}
\newcommand{\sseq}{\subseteq}
\newcommand{\poly}[1]{\mbox{poly}\of{#1}}
\newcommand{\fel}[2]{\el{#1}\to\el{#2}}
\newcommand{\feln}[2]{\el{#1}\times\el{#2}\to\N}
\newcommand{\ndivlog}[1]{\frac{#1}{\log{#1}}}
\newcommand{\rnlogkn}[2]{
  \frac{#1}{\log_{#2}{\frac{#1}{\log_{#2}{\frac{#1}{\ldots}}}}}
}
\newcommand{\seq}[2]{{#1}_1, {#1}_2, \ldots, {#1}_{#2}}
\begin{document}
\title{Tight lower bound for the channel assignment problem}
\author{Arkadiusz Socała
  \thanks{E-mail: \texttt{as277575@students.mimuw.edu.pl}}}
\affil{University of Warsaw, Poland}

\date{}

\setcounter{page}{0}

\maketitle
\vspace{-1cm}
\begin{abstract}
We study the complexity of the \CA problem.
A major open problem asks whether \CA admits an $O(c^n)$-time algorithm,
for a constant $c$ independent of the weights on the edges.
We answer this question in the negative i.e. we show
that there is no
$2^{o(n\log n)}$-time algorithm solving \CA unless the
Exponential Time Hypothesis fails.
Note that the currently best known algorithm works in time
$O^*(n!) = 2^{O(n\log n)}$ so our lower bound is tight.
\end{abstract}

\newpage

\section{Introduction}

In the \CA problem, we are given a symmetric weight function $w:V^2\to\N$
(we assume that $0\in\N$).
The elements of $V$ will be called vertices (as $w$ induces a graph on the
vertex set $V$ with edges corresponding to positive values of $w$).
We say that $w$ is $\ell$-bounded when for every $x, y\in V$ we have
$w(x, y) \le \ell.$
An assignment $c:V\to\Z$ is called {\em proper} when for each pair of vertices
$x, y$ we have $|c(x) - c(y)|\ge w(x, y).$
The number $\left(\max_{v\in V} c(v) - \min_{v\in V} c(v) + 1\right)$ is called
the {\em span} of $c.$
The goal is to find a proper assignment of minimum span.
Note that the special case when $w$ is $1$-bounded corresponds to the classical
graph coloring problem.
It is therefore natural to associate the instance of the channel assignment
problem with an edge-weighted graph $G=(V, E)$ where
$E=\set{uv\ :\ w(u, v) > 0}$ with edge weights $w_E:E\to\N$ such that
$w_E(xy) = w(x, y)$ for every $xy\in E$ (in what follows we abuse the notation
slightly and use the same letter $w$ for both the function defined on $V^2$
and $E$).
The minimum span is called also the span of $(G, w)$ and denoted by
$\myspan(G, w).$

It is interesting to realize the place \CA in a kind of hierarchy of constraint
satisfaction problems.
We have already seen that it is a generalization of the classical graph
coloring.
It is also a special case of the constraint satisfaction problem (CSP).
In CSP, we are given a vertex set $V,$ a constraint set $\mathcal{C}$ and a
number of colors $d.$
Each constraint is a set of pairs of the form $(v, t)$ where $v\in V$ and
$t\in\set{1, \ldots, d}.$
An assignment $c:V\to\set{1, \ldots, d}$ is {\em proper} if every constraint
$A\in\mathcal{C}$ is satisfied, i.e.\ there exists $(v, t)\in A$ such that
$c(v)\ne t.$
The goal is to determine whether there is a proper assignment.
Note that \CA corresponds to CSP where $d=s$ and every edge $uv$ of weight
$w(uv)$ in the instance of \CA corresponds to the set of constraints of the
form $\set{(u, t_1), (v, t_2)}$ where $|t_1 - t_2|<w(uv).$

In the general case th best known algorithm runs in $O^*(n!)$ time
(see McDiarmid~\cite{mcdiarmid}).
However, there has been some progress on the $\ell$-bounded variant.
McDiarmid~\cite{mcdiarmid} came up with an $O^*((2\ell+1)^n)$-time algorithm
which has been next improved by Kral~\cite{kral} to $O^*((\ell+2)^n),$
further to $O^*((\ell+1)^n)$ by Cygan and Kowalik~\cite{cygan}
and to $O^*((2\sqrt{\ell + 1})^n)$ by Kowalik and Socala~\cite{meet}.
These are all dynamic programming (and hence exponential space) algorithms.
The last but one applies the fast zeta transform to get a minor speed-up
and the last one uses the meet-in-the-middle approach.
Interestingly, all these works show also algorithms which {\em count} all
proper assignments of span at most $s$ within the same running time
(up to polynomial factors) as the decision algorithm.

Since graph coloring is solvable in time $O^*(2^n)$~\cite{bhk:coloring}
it is natural to ask whether \CA is solvable in time $O^*(c^n),$ for some
constant $c.$
It is a major open problem (see~\cite{kral,cygan,dagstuhl}) to find such a
$O(c^n)$-time algorithm for $c$ independent of $\ell$ or prove that it does not
exist under a reasonable complexity assumption.
A complexity assumption commonly used in such cases is the Exponential Time
Hypothesis (ETH),
introduced by Impagliazzo, Paturi and Zane~\cite{PZ01}.
It states that \ThreeSAT cannot be computed in time $2^{o(n)},$ where $n$ is
the number of variables in the input formula.
The open problem mentioned above becomes even more interesting when we realize
that under ETH, CSP does not have a $O^*(c^n)$-time algorithm for a constant
$c$ independent of $d,$ as proved by Traxler~\cite{traxler}.

\heading{Our Results}
Our main result is a proof that \CA does not admit a $O(c^n)$-time for a
constant $c$ under the ETH assumption.
By applying a sequence of reductions (see Figure~\ref{fig:sequence})
starting in \ThreeSAT and ending in \CA
we were able to solve this open problem and to show that there is no
$2^{o(n\log n)}$-time algorithm solving \CA unless the ETH fails.
Note that the currently best known algorithm works in time
$O^*(n!) = 2^{O(n\log n)}$ so our lower bound is tight.
\newpage
\heading{\CMW as a generic problem without $2^{o(n\log n)}$-time algorithm}\\
In order to prove that there is no $2^{o(n\log n)}$-time algorithm for some
problem we may want to use a reduction from some better studied problem,
say from \ThreeSAT for which we know that there is no $2^{o(n)}$-time
algorithm unless the ETH fails.
Therefore
in this case we need to be able to transform an instance of \ThreeSAT of
size $n$ into an instance of our target problem of size $O\of{\ndivlog{n}}.$
Then $2^{o(n\log n)}$-time algorithm for our target problem would imply
$2^{o(n)}$-time algorithm for \ThreeSAT which contradicts the ETH.
However such reductions which compress the size of the instance from
$O(n)$ to e.g. $O\of{\frac{n}{\log n}}$ are very rare.
As shown in the Figure~\ref{fig:sequence} we do this for the problem \CMW
defined as follows:

\defproblem{\CMW}{
  Two complete weighted bipartite graphs $G_1=(V_1\cup W_1, E, w_1)$
  and $G_2=(V_2\cup W_2, E, w_2)$ such that $\pow{V_1} = \pow{W_1}$
  and $\pow{V_2} = \pow{W_2}.$
  The weight functions $w_1, w_2$ have nonnegative integer values.
}{
  Are there two perfect matchings $M_1$ in $G_1$ and $M_2$ in $G_2$
  such that $w_1\of{M_1} = w_2\of{M_2}?$
}

\newcommand{\stack}[2]{\begin{tabular}{c} #1 \\ #2 \\ \end{tabular}}

\begin{figure}[t]
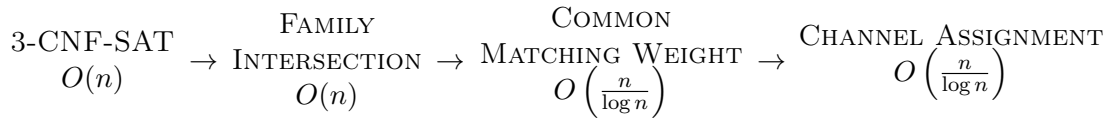

\begin{center}
    \begin{minipage}{0.14\textwidth}
      \begin{center}
        \ThreeSAT\\
        $O(n)$
      \end{center}
    \end{minipage}
    $\rightarrow$
    \begin{minipage}{0.16\textwidth}
      \begin{center}
        \FInt\\
        $O(n)$
      \end{center}
    \end{minipage}
    $\rightarrow$
    \begin{minipage}{0.22\textwidth}
		  \begin{center}
        \CMW\\
        $O\of{\ndivlog{n}}$
		  \end{center}
    \end{minipage}
    $\rightarrow$
    \begin{minipage}{0.25\textwidth}
		  \begin{center}
        \CA\\
        $O\of{\ndivlog{n}}$
      \end{center}
    \end{minipage}
  \caption{
    The sequence of the used reductions and the size of the instance.
    The compression follows between \FInt and \CMW.
    While the definition of \FInt is rather technical
    the \CMW problem is quite natural and it can be used as a generic problem
    without $2^{o(n\log n)}$-time algorithm.
  }
  \label{fig:sequence}
\end{center}
\vspace{-1cm}
\end{figure}

Note that in order to show that a new problem \ProblemName{P} does not
admit a $2^{o(n\log n)}$-time algorithm it suffices to give a linear reduction
from \CMW to \ProblemName{P}.
We shown such reduction for \CA and we hope that the same thing can be done
also for other problems.

\heading{Organization of the paper}
In Section~\ref{sec:CMW} we describe a sequence of reductions
starting in \ThreeSAT and ending in \CMW and the conclusions from
the existence of these reductions leading to the theorem on the hardness
of \CMW.
In Section~\ref{sec:CA} we present a reduction from \CMW to \CA and prove
the hardness of \CA

\heading{Notation}
Throughout the paper $n$ denotes the number of the vertices of the graph under
consideration.
For an integer $k,$ by $\el{k}$ we denote the set $\set{1, 2, \ldots, k}.$

\section{Hardness of \CMW}
\label{sec:CMW}
In this section
we describe a sequence of reductions starting in \ThreeSAT and ending in \CMW
and the consequences of these reductions on the complexity of \CMW.
In the second of these two reductions we compress the instance from the size
$O(n)$ to the size $O\of{\ndivlog{n}}$ which is an important part of our
result.

\subsection{From \ThreeSAT to \FInt.}
\label{subsec:ThreeSAT-to-FInt}

The intuition is that for a given instance of \ThreeSAT we consider a set
of the \emph{occurrences} of the variables in the formula
i.e. we treat any two different
occurrences of the same variable as they were two different variables.
Note that in a \ThreeSAT instance with $n$ variables and $m$ clauses we have
$3m$ occurrences of the $n$ variables so there are $2^{3m}$ assignments of the
occurrences.

We would like to represent two useful subsets of the set of all $2^{3m}$
assignments of the occurrences.
The first is the set of the \emph{consistent} assignments i.e.
such assignments of the occurrences that
all the occurrences of the same variable have the same value.
The second is the set of the assignments of the occurrences such that every
clause is satisfied
(although they are allowed to have different values for different occurrences
of the same variable i.e. they do not need to be consistent).
Note that the instance of \ThreeSAT is a YES-instance if and only if
the intersection of these two sets is nonempty.

To represent those two sets we would like to use the following concept.
For a function $f:\feln{a}{b}$ we define
$X_f = \set{\sum_{i=1}^{a} f\of{i, \sigma\of{i}} | \sigma: \fel{a}{b}}.$
We call this set an \emph{$f$-family}.

We will define a function $f$ such that the elements of the $f$-family $X_f$
correspond to the assignments of the occurrences
such that every two occurrences of the same variable have the same value.
Then we define another function $g$ such that the
$g$-family $X_g$ represents the assignments of the
occurrences such that every clause is satisfied.
Thus we reduce \ThreeSAT into the following problem:

\defproblem{\FInt}{
  A function $f:\feln{a}{b}$ and a function $g:\feln{c}{d}.$
}{
  Does $X_f \cap X_g \neq \emptyset$?
}\\

\begin{example} \label{exmp:TwoSAT-to-FInt}
Let us illustrate our approach on a \TwoSAT formula
$\varphi = (\alpha \vee \beta) \wedge (\neg \alpha \vee \gamma).$
We can make a distinction between different occurrences of the same variables
$\varphi' = (\alpha_1 \vee \beta_1) \wedge (\neg \alpha_2 \vee \gamma_1).$
So we have four occurrences of the variables and $2^4=16$ assignments.
We represent those assignments as
numbers from the set $\set{0, 1, \ldots, 2^4 - 1}$
However, it will be convenient to refer to these numbers as bit vectors
of length $4$ where
the $i$-th bit represents the value of the $i$-th occurrence
(among all the occurrences of all the variables).

To represent the set of the consistent assignments of the occurrences we can
use a function $f:\feln{n}{2}$ such that the value of $f(i, 1)$ is a bit vector
representing all the occurrences of the $i$-th variable and $f(i, 2) = 0.$
So in our example we have
$f(1, 1) = 1010_2,$ $f(2, 1) =  0100_2, $ and $f(3, 1) = 0001_2.$
Therefore
$X_f = \set{0000_2, 0001_2, 0100_2, 0101_2, 1010_2, 1011_2, 1110_2, 1111_2}.$

To represent the set of the assignments which satisfies all the clauses we
can use a function $g:\feln{m}{3}$ such that $g(i, j)$ is a $j$-th assignment
(in some fixed order) 
of the occurrences of the variables in the $i$-th clause which satisfies
this clause.
Note that every clause in \TwoSAT have $3$ assignments of the occurrences
which satisfies this clause.
So in our example we have $g(1, 1) = 1000_2,$ $g(1, 2) = 0100_2,$
$g(1, 3) = 1100_2,$ $g(2, 1) = 0000_2,$ $g(2, 2) = 0001_2$
and $g(2, 3) = 0011_2.$
Therefore
$X_g = \set{0100_2, 0101_2, 0111_2, 1000_2, 1001_2, 1011_2, 1100_2, 1101_2,
1111_2}.$

The set $X_f \cap X_g = \set{0100_2, 0101_2, 1011_2, 1111_2}$ is the set
of all the consistent assignments of the occurrences such that each clause
is satisfied.
\end{example}

We can formalize our observation as following.

\begin{lemma} \label{lem:ThreeSAT-to-FInt} ($\bigstar$)
There is a polynomial time reduction from a given instance of \ThreeSAT with
$n$ variables and $m$ clauses
into an instance of \FInt with $f:\feln{n}{2}$ and $g:\feln{m}{7}$
such that $\max{X_f} < 2^{3m}$ and  $\max{X_g} < 2^{3m}.$
\end{lemma}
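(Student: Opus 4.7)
The plan is to formalize and generalize the 2-CNF-SAT construction in Example~\ref{exmp:TwoSAT-to-FInt}. First I fix an enumeration of the $3m$ literal-occurrences in the formula and identify assignments to these occurrences with integers in $[0, 2^{3m})$ via their bit expansions. Following the convention used in the example, the $j$-th bit stores the value of the \emph{variable} at the $j$-th occurrence (not the value of the literal at that occurrence), so negations are handled only when enumerating satisfying clause-assignments. Both $X_f$ and $X_g$ will be subsets of $[0, 2^{3m})$, which yields the two claimed bounds immediately.

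For $f$, for each variable $x_i$ I let $M_i \in [0, 2^{3m})$ be the bit mask with $1$'s at exactly the bit positions of the occurrences of $x_i$, and define $f(i,1) = M_i$ and $f(i,2) = 0$. Since distinct variables have disjoint sets of occurrences, the masks $M_1, \ldots, M_n$ have pairwise disjoint support, so for any $\sigma:\el{n}\to\el{2}$ the sum $\sum_i f(i,\sigma(i))$ is simply the bitwise OR of those $M_i$ with $\sigma(i)=1$, hence lies in $[0, 2^{3m})$. Consequently $X_f$ is exactly the set of bit vectors that assign one common variable-value to all occurrences of each variable, i.e., the set of consistent assignments of the occurrences.

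For $g$, for each clause $C_k$ exactly $2^3-1 = 7$ of the $8$ variable-assignments to the three underlying variables of $C_k$'s literals satisfy $C_k$: all except the unique assignment that falsifies every literal, which is read off from the signs of the literals in $C_k$. I order these $7$ satisfying variable-assignments arbitrarily and let $g(k,j)$ be the bit vector supported on the three bits of $C_k$'s occurrences that encodes the $j$-th such assignment. Again, distinct clauses occupy disjoint bit positions, so sums $\sum_k g(k, \sigma(k))$ are bitwise ORs and $X_g \subseteq [0, 2^{3m})$.

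For correctness, an integer $V$ lies in $X_f$ iff all bits corresponding to the same variable agree, i.e., $V$ encodes a global assignment of the $n$ variables; $V$ lies in $X_g$ iff for every clause $C_k$ the three bits of $C_k$'s occurrences encode a variable-assignment that satisfies $C_k$. Hence $X_f \cap X_g \neq \emptyset$ iff the original formula is satisfiable, and the reduction runs in polynomial time. The only mildly delicate point is remembering that $g$ must enumerate satisfying \emph{variable}-assignments per clause (taking the negations into account), not satisfying literal patterns; once this convention is fixed, the construction is mechanical and there is no real obstacle.
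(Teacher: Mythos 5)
Your construction is correct and essentially the same as the paper's: the same per-occurrence bit encoding of assignments, the same $f$ (characteristic vector of a variable's occurrences versus $0$), and the same $g$ (the seven satisfying patterns per clause), with disjointness of bit supports giving $\max X_f, \max X_g < 2^{3m}$ and the intersection characterizing satisfiability. One small caution: the seven satisfying patterns should be counted over the clause's three \emph{occurrences} treated as independent variables (as the paper does), not over its underlying variables, so that the count is exactly $7$ even when a variable appears more than once in a clause; your per-occurrence bit convention already accommodates this.
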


The proof is straightforward and its idea should be illustrated
by the Example~\ref{exmp:TwoSAT-to-FInt}.
It is moved to the Appendix due to space limitations.

\subsection{From \FInt to \CMW.}
\label{subsec:FInt-to-CMW}

Consider an $f$-family $X_f$
and a $g$-family $X_g$ for some functions $f:\feln{n}{2}$ and $g:\feln{m}{7}.$
In this section we show how to encode $X_f$ in some weighted bipartite
graph $G_1$ so that the set of the weights of the perfect matchings in $G_1$
will be equal to $X_f.$
Similarly we will encode $X_g$ in some bipartite graph $G_2$
such that the set of the weights of the perfect matchings in $G_2$
will be equal to $X_g.$
So the set $X_f \cap X_g$ is nonempty if and only if
$G_1$ and $G_2$ contain perfect matchings with the same weight.
Moreover the number of the vertices of the graph $G_1$ will be
$O\of{\ndivlog{n}}$ and the number of the vertices of the graph $G_2$
will be $O\of{\ndivlog{m}}.$
This is a crucial step of our construction, because the instance size 
decreases (by a logarithmic factor).

\newcommand{\h}[1]{\hat{#1}}
\newcommand{\hN}{\h{\N}}
\newcommand{\hel}[1]{\h{\el{#1}}}

Before we describe the reduction we need the following technical lemma which
describe constructions of permutations of some specified
properties.
The permutations correspond naturally to perfect matchings in bipartite graphs.
Elements of $\el{k}^b$ will be treated as $b$-character words over alphabet
$\el{k},$ i.e. for $x\in\el{k}$ and $w\in[k]^b$ by $xw$ we mean the word of
length $b + 1$ obtained by concatenating $x$ and $w.$
For convenience we define a set $\hN = \set{\h{0}, \h{1}, \h{2}, \ldots}$
as a copy of the natural numbers $\N$
and for every $n\in\N$ we define $\hel{n} = \set{\h{1}, \h{2}, \ldots, \h{n}}.$
Every set $\hel{k}^b$ is just a copy of $\el{k}^b$
so we refer to bijections between $\el{k}^b$ and $\hel{k}^b$
as to permutations.

\begin{lemma} \label{lem:bijection} ($\bigstar$)
Let $b\in \N$ and $\alpha:\hel{k}^b \times \el{b} \to \el{k} \cup \set{\bot}$
such that for every $\h{w}\in\hel{k}^b$ and for every $i\in\el{b}$
holds $\alpha\of{\h{w}, i} \neq \bot$ if and only if $\h{w}_i = \h{1}.$ 
There is a permutation $\phi:\hel{k}^b \to \el{k}^b$
such that for every $\h{w}\in\hel{k}^b$ and for every $i\in\el{b}$
if $\h{w}_i = \h{1}$ then $\phi\of{\h{w}}_i = \alpha\of{\h{w}, i}.$
\end{lemma}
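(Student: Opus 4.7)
The plan is to argue by induction on $b$. The base case $b=0$ is immediate, since both $\hel{k}^0$ and $\el{k}^0$ are singletons and $\alpha$ is the empty function. For the inductive step I will peel off the last coordinate and reduce the problem to $k$ independent instances of the lemma at parameter $b-1$.

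For $b \ge 1$ I first define $\phi\of{\h{w}}_b$ for every $\h{w} \in \hel{k}^b$. Write $\h{w} = \h{w}'\h{c}$ with $\h{w}' \in \hel{k}^{b-1}$ and $\h{c} \in \hel{k}$. The hypothesis forces $\phi\of{\h{w}'\h{1}}_b = \alpha\of{\h{w}'\h{1}, b} \in \el{k}$. For each fixed $\h{w}'$ I then pick an arbitrary bijection from $\hel{k} \setminus \set{\h{1}}$ onto $\el{k} \setminus \set{\alpha\of{\h{w}'\h{1}, b}}$ and use it to define $\phi\of{\h{w}'\h{c}}_b$ for $\h{c} \ne \h{1}$. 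The point of this construction is that, for every $\h{w}'$, the map $\h{c} \mapsto \phi\of{\h{w}'\h{c}}_b$ is a bijection from $\hel{k}$ to $\el{k}$.

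For each $d \in \el{k}$ let $\hat{W}_d = \set{\h{w} : \phi\of{\h{w}}_b = d}$. The previous bijectivity statement implies that for every $\h{w}' \in \hel{k}^{b-1}$ there is a unique $\h{c}_d\of{\h{w}'} \in \hel{k}$ with $\h{w}'\h{c}_d\of{\h{w}'} \in \hat{W}_d$, so $\hat{W}_d$ is canonically identified with $\hel{k}^{b-1}$ via $\h{w}' \mapsto \h{w}'\h{c}_d\of{\h{w}'}$. Completing the definition of $\phi$ on $\hat{W}_d$, with images in $\el{k}^{b-1} \times \set{d}$, therefore amounts to choosing a permutation $\phi_d : \hel{k}^{b-1} \to \el{k}^{b-1}$ subject to the residual constraints at positions $i \in \el{b-1}$. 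Since $\h{w}'_i = \h{1}$ if and only if $\of{\h{w}'\h{c}_d\of{\h{w}'}}_i = \h{1}$ for $i < b$, the induced $\alpha_d\of{\h{w}', i} := \alpha\of{\h{w}'\h{c}_d\of{\h{w}'}, i}$ satisfies the lemma's hypothesis at parameter $b-1$. The inductive hypothesis supplies $\phi_d$, and gluing over $d \in \el{k}$ yields the desired permutation $\phi$.

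I do not foresee a real obstacle. The argument hinges on the observation that, along any fibre indexed by $\h{w}'$, exactly one input is constrained in the last coordinate and exactly one output value is forced; this leaves $k-1$ free preimages to cover the $k-1$ remaining outputs, which is precisely what is needed to bootstrap the induction. The only bookkeeping to be careful about is that the identifications $\hat{W}_d \cong \hel{k}^{b-1}$ preserve the structure of the hypothesis so that induction applies cleanly.
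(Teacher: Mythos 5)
Your proof is correct, and it takes a genuinely different route from the paper's. The paper also inducts on $b$, but it peels off the \emph{first input} character: for each $x\in\el{k}$ it restricts $\alpha$ to words beginning with $\h{x}$, obtains permutations $\phi_1,\dots,\phi_k$ of $\hel{k}^{b-1}$ by induction, and then needs a separate merging lemma (Lemma~\ref{lem:bijection-merge}) to choose the first output characters: starting from $\h{x}\h{w}\mapsto x\,\phi_x(\h{w})$ it performs value swaps to force $\phi(\h{1}\h{w})_1=\alpha(\h{1}\h{w},1)$, and must verify that these swaps are pairwise disjoint and preserve bijectivity. You instead peel off the \emph{last} coordinate and commit to it before recursing: in each fibre $\{\h{w}'\h{c}:\h{c}\in\hel{k}\}$ only the word $\h{w}'\h{1}$ is constrained at position $b$, and it forces a single output value, so the remaining $k-1$ inputs can be bijected arbitrarily onto the remaining $k-1$ values; partitioning the domain by the chosen last output value $d$ then yields $k$ independent instances at parameter $b-1$ (your identification of $\hat W_d$ with $\hel{k}^{b-1}$ and the induced $\alpha_d$ are sound, since positions $i<b$ only see the prefix), and the gluing is automatic because both domain and codomain are partitioned. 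The net effect is that your argument dispenses with the paper's swap-and-disjointness bookkeeping entirely, at the mild cost of the up-front fibre-by-fibre choice of last coordinates; the paper's version isolates the combinatorial content in a reusable merging lemma, while yours is the more streamlined self-contained induction.
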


Now we can describe the reduction.

\begin{lemma} \label{lem:family-to-matchings}
For a function $f:\feln{n}{k}$ there is a full bipartite graph
$G=\p{V_1 \cup V_2, E, w}$ such that
\begin{itemize}
  \item for every $x\in X_f$ there exists a perfect matching $M$ of $G$
    such that $w(M) = x,$
  \item for every perfect matching $M$ of $G$ we have $w(M) \in X_f,$
  \item $\pow{V_1} = \pow{V_2} = O\of{\frac{nk^2\log{k}}{\log{n} + \log{k}}}.$
\end{itemize}
\end{lemma}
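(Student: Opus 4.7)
The plan is to realize $X_f$ as the set of weights of perfect matchings in a single complete bipartite graph, packing all $n$ coordinates of $f$ into one gadget by means of Lemma~\ref{lem:bijection}.

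\emph{Setup.}
Let $b$ be the smallest positive integer with $bk^{b-1}\ge n$, and put $V_1=\hel{k}^b$, $V_2=\el{k}^b$, $E=V_1\times V_2$. The ``specifiable'' set $S=\set{(\h{w},i)\in\hel{k}^b\times\el{b}\,:\,\h{w}_i=\h{1}}$ has $\pow{S}=bk^{b-1}\ge n$, so I fix an injection $\pi:\el{n}\to S$ and write $\pi(j)=(\pi_1(j),\pi_2(j))$. Define the weights by
\[
 w(\h{w},v)\;=\;\sum_{j\in\el{n}\,:\,\pi_1(j)=\h{w}} f\p{j,\,v_{\pi_2(j)}},
\]
so each left vertex $\h{w}$ carries the contributions of the coordinates that $\pi$ routes to it; the weights are nonnegative integers because $f$ is.

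\emph{Matching--assignment correspondence.}
A perfect matching of $G$ is a bijection $\phi:\hel{k}^b\to\el{k}^b$, and its weight is $\sum_{\h{w}}w(\h{w},\phi(\h{w}))=\sum_{j\in\el{n}} f\p{j,\phi(\pi_1(j))_{\pi_2(j)}}$. The map $\sigma(j)=\phi(\pi_1(j))_{\pi_2(j)}\in\el{k}$ then exhibits this weight as an element of $X_f$. Conversely, given $x=\sum_j f(j,\sigma(j))\in X_f$, I define $\alpha:\hel{k}^b\times\el{b}\to\el{k}\cup\set{\bot}$ by $\alpha(\h{w},i)=\sigma(\pi^{-1}(\h{w},i))$ on the image of $\pi$, $\alpha(\h{w},i)=1$ on $S$ outside the image, and $\alpha(\h{w},i)=\bot$ when $\h{w}_i\ne\h{1}$. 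Lemma~\ref{lem:bijection} then returns a bijection $\phi$ with $\phi(\pi_1(j))_{\pi_2(j)}=\alpha(\pi(j))=\sigma(j)$ for every $j$, so the corresponding matching has weight $x$.

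\emph{Size bound (the main obstacle).}
It remains to check $\pow{V_1}=\pow{V_2}=k^b=O\p{nk^2\log k/(\log n+\log k)}$. The case $b=1$ occurs only for $n=1$ and is immediate. For $b\ge 2$, the minimality of $b$ gives $(b-1)k^{b-2}<n$, hence $k^b<nk^2/(b-1)$. The inequality $bk^{b-1}\ge n$ combined with the elementary upper bound $b\le 1+\log n/\log k$ forces $(b-1)\log k=\Omega(\log n+\log k)$, and substituting this into the bound on $k^b$ yields the claim. This size estimate is the only nontrivial calculation---the construction itself is immediate from Lemma~\ref{lem:bijection}---but it does require a small case split to handle the regime $k\ge n$, where $b=2$ is forced and the target size degenerates to $\Theta(k^2)$.
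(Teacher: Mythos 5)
Your construction is essentially the paper's: you pick the same minimal $b$ with $bk^{b-1}\ge n$, take $V_1=\hel{k}^b$, $V_2=\el{k}^b$, route the $n$ coordinates of $f$ to the ``$\h{1}$-slots'' (your injection $\pi$ is just the paper's slot-labelling $\beta$ with the zero-padding of $f$ made implicit), and invoke Lemma~\ref{lem:bijection} for the direction $X_f\subseteq\{w(M)\}$ exactly as the paper does, with the same style of size estimate. One small inaccuracy: the ``elementary upper bound'' $b\le 1+\log n/\log k$ is false in general (minimality only yields $(b-1)k^{b-2}<n$, hence $k^{b-2}<n$, i.e.\ $b\le 2+\log_k n$; e.g.\ $k=10$, $n=50$ gives $b=3$), but the conclusion you need, $(b-1)\log k=\Omega(\log n+\log k)$, still follows for $b\ge 2$ (e.g.\ from $\log b\le (b-1)\log k$ together with $bk^{b-1}\ge n$), so the size bound and the proof stand.
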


\newcommand{\red}[1]{{\color{red}#1}}
\newcommand{\green}[1]{{\color{green}#1}}
\newcommand{\blue}[1]{{\color{blue}#1}}
\newcommand{\cyan}[1]{{\color{cyan}#1}}

\begin{figure}[t]
\begin{center}
  \begin{tabular}{c|c|c|c|c|}
    & 
    $\bt{\h{1}_{\red{(1)}}, \h{1}_{\blue{(2)}}}$ &
    $\bt{\h{1}_{\green{(3)}}, \h{2}}$ &
    $\bt{\h{2}, \h{1}_{\cyan{(4)}}}$ &
    $\bt{\h{2}, \h{2}}$ \\
  \hline
    $\bt{1_{(i)}, 1_{(ii)}}$ &
    $f\of{\red{1}, 1_{(i)}) + f(\blue{2}, 1_{(ii)}}$ &
    $f\of{\green{3}, 1_{(i)}}$ &
    $f\of{\cyan{4}, 1_{(ii)}}$ &
    $0$ \\
  \hline
    $\bt{1_{(i)}, 2_{(ii)}}$ &
    $f\of{\red{1}, 1_{(i)}) + f(\blue{2}, 2_{(ii)}}$ &
    $f\of{\green{3}, 1_{(i)}}$ &
    $f\of{\cyan{4}, 2_{(ii)}}$ &
    $0$ \\
  \hline
    $\bt{2_{(i)}, 1_{(ii)}}$ &
    $f\of{\red{1}, 2_{(i)}) + f(\blue{2}, 1_{(ii)}}$ &
    $f\of{\green{3}, 2_{(i)}}$ &
    $f\of{\cyan{4}, 1_{(ii)}}$ &
    $0$ \\
  \hline
    $\bt{2_{(i)}, 2_{(ii)}}$ &
    $f\of{\red{1}, 2_{(i)}) + f(\blue{2}, 2_{(ii)}}$ &
    $f\of{\green{3}, 2_{(i)}}$ &
    $f\of{\cyan{4}, 2_{(ii)}}$ &
    $0$ \\
  \hline
  \end{tabular}
  \caption{
    The weights on the edges
    of a graph encoding an $f$-family for $f:\feln{4}{2}.$
    The lower indices $(1), (2), (3)$ and $(4)$
    are added to indicate the correspondence between
    the occurrences of $\h{1}$ and the elements of $\el{n}$
    (the first argument of the function $f$).
    The lower indices $(i)$ and $(ii)$ are added to indicate the correspondence
    between the second argument of the function $f$ and the position in the
    (two element) sequence $\langle \cdot, \cdot \rangle.$
  }
  \label{fig:k44-tabular}
\end{center}
\vspace{-1cm}
\end{figure}

\begin{proof}
Let us consider the smallest $b\in\N_+$ such that $c = b \cdot k^{b-1} \geq n.$
Later we will show that $\pow{V_1} = \pow{V_2} = k^b$ is sufficient.

For convenience we extend our chosen function $f:\feln{n}{k}$
to $f:\feln{c}{k}$ in such a way that
for every $i = n + 1, n + 2, \ldots, c$ and for every $j\in\el{k}$
we put $f\of{i, j} = 0.$
Note that the $f$-family $X_f$
does not change after this extension.

Let $V_1 = \hel{k}^b$ and $V_2 = \el{k}^b$ be the sets of words of length
$b$ over the alphabets respectively $\hel{k}$ and $\el{k}.$
Note that $\pow{V_1} = \pow{V_2} = k^b.$

Let $\beta:V_1\times\el{b}\to\el{c}\cup\set{\bot}$
be any function such that
if $\h{w}_j \neq \h{1}$ then $\beta\of{\h{w}, j} = \bot$
and every value from the set $\el{c}$ is used exactly once,
i.e.,
for every $x\in\el{c}$ there is exactly one argument
$(\h{w}, j) \in V_1 \times \el{b}$
such that
$\beta\of{\h{w}, j} = x.$
Note that such a function always exists because the total number of the
occurrences of $\h{1}$ in all the words in $V_1$
is exactly $c = b \cdot k^{b-1}.$

Now we define our weight function $w:V_1\times V_2\to\N$
as follows
\[
  w\of{\h{t}, u} =
    \sum_{\substack{i\in\el{b} \\ \beta\of{\h{t}, i}\neq\bot}}
      f\of{\beta\of{\h{t}, i}, u_i}.
\]
An example of such weight function can be found in
Figure~\ref{fig:k44-tabular}
(or in the Appendix in Figure~\ref{fig:k44} as a picture of a bipartite
graph).

Note that because $\beta$ picks every value from the set $\el{c}$
exactly once then for every permutation $\phi:V_1\to V_2$ we have
\[
  \sum_{\h{t}\in V_1} w\of{\h{t}, \phi\of{\h{t}}} =
  \sum_{\h{t}\in V_1}
    \sum_{\substack{i\in\el{b} \\ \beta\of{\h{t}, i}\neq\bot}}
      f\of{\beta\of{\h{t}, i}, \phi\of{\h{t}}_i}
  \in X_f.
\]
In other words the set of the weights of all perfect matchings in $G$
is a subset of $X_f$ as required.

We also need to show that for every $x \in X_f$ there exists some
permutation $\phi:V_1\to V_2$ such that
$\sum_{\h{t}\in V_1} w\of{\h{t}, \phi\of{\h{t}}} = x.$
This permutation gives us a corresponding perfect matching of weight $x$
in $G.$ 

Let us take a function $\sigma:\fel{c}{k}$ such that 
$x = \sum_{i\in\el{c}} f\of{i, \sigma\of{i}},$
which exists by the definition of $X_f.$
Define $\alpha:\hel{k}^b\times\el{b}\to\el{k}\cup\set{\bot}$ as follows:
\[
  \alpha\of{\h{u}, i} =
  \begin{cases}
    \sigma\of{\beta\of{\h{u}, i}} & \mbox{for } \h{u}_i = \h{1}\\
    \bot & \mbox{for } \h{u}_i \neq \h{1}.\\
  \end{cases}
\]

Now we can use Lemma~\ref{lem:bijection} with function $\alpha$
to obtain a permutation $\phi:\hel{k}^b \to \el{k}^b$ such that
for every $\h{u}\in\hel{k}^b$ and for every $i\in\el{b}$
if $\h{u}_i = \h{1}$ then $\phi\of{\h{u}}_i = \sigma\of{\beta\of{\h{u}, i}}.$

So we have that
\[
  \sum_{\h{u}\in\hel{k}^b} w\of{\h{u}, \phi\of{\h{u}}} =
  \sum_{\h{u}\in\hel{k}^b}
    \sum_{\substack{i\in\el{b} \\ \beta\of{\h{u}, i}\neq\bot}}
      f\of{\beta\of{\h{u}, i}, \sigma\of{\beta\of{\h{u}, i}}}
 = \sum_{i\in\el{c}}  f\of{i, \sigma\of{i}}
 = x.
\]

Hence we have shown that
$X_f$ is the set of weights of all perfect matchings in graph $G.$
The last thing is to show that the number of the vertices
is sufficiently small.
We know that $\p{b - 1} \cdot k^{b - 2} < n$ so
$\p{b - 1} \cdot k^{b - 1} < nk$
and then
$b - 1 < \log_{k}{\rnlogkn{nk}{k}}.$
Therefore
$k^b
  < k \cdot \rnlogkn{nk}{k}
  = O\of{\frac{nk^2}{\log_k{nk}}}
  = O\of{\frac{nk^2\log{k}}{\log{n} + \log{k}}}.$
So $|V_1| = |V_2| = k^b = O\of{\frac{nk^2\log{k}}{\log{n} + \log{k}}},$
as required.
\end{proof}

Lemma~\ref{lem:family-to-matchings} immediately implies the following result.

\begin{lemma} \label{lem:FInt-to-CMW}
There is a polynomial time reduction that
for an instance $I = (f, g)$ of \FInt with
$f:\feln{a}{b}$ and $g:\feln{c}{d}$ reduces it into
an instance of \CMW $J = (G_1, G_2)$ with
$\pow{V\of{G_1}} = O\of{\frac{a b^2 \log{b}}{\log{a} + \log{b}}}$
and $\pow{V\of{G_2}} = O\of{\frac{c d^2 \log{d}}{\log{c} + \log{d}}}$
vertices.
The sets of the weights of all perfect matchings in $G_1$ and in $G_2$
are equal respectively to $X_f$ and $X_g.$
\end{lemma}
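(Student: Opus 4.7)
The plan is to obtain Lemma~\ref{lem:FInt-to-CMW} as a direct double application of Lemma~\ref{lem:family-to-matchings}. First I would feed $f:\feln{a}{b}$ into Lemma~\ref{lem:family-to-matchings} to produce a complete weighted bipartite graph $G_1$ whose two parts have size $O\of{\frac{ab^2\log b}{\log a + \log b}}$ and whose set of perfect matching weights is exactly $X_f$. Then I would do the same with $g:\feln{c}{d}$ to build a graph $G_2$ whose perfect matching weights form exactly $X_g$ and whose part sizes satisfy $\pow{V\of{G_2}} = O\of{\frac{cd^2\log d}{\log c + \log d}}$. The output of the reduction is simply the pair $J = (G_1, G_2)$.

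Correctness is then immediate from the defining property of the two graphs: there exist perfect matchings $M_1$ in $G_1$ and $M_2$ in $G_2$ with $w_1(M_1) = w_2(M_2)$ if and only if $X_f \cap X_g \neq \emptyset$, which is precisely the YES-condition of the input \FInt instance $I = (f, g)$. So $I$ is a YES-instance of \FInt iff $J$ is a YES-instance of \CMW, and the size bounds on $\pow{V\of{G_1}}$ and $\pow{V\of{G_2}}$ stated in the lemma are inherited verbatim from Lemma~\ref{lem:family-to-matchings}.

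The only remaining item is to verify that the reduction runs in polynomial time. Here I would just observe that the vertex sets of $G_1$ and $G_2$ already have polynomial size in the input (they are bounded by $O(ab^2)$ and $O(cd^2)$ respectively), the bijection $\beta$ used inside the proof of Lemma~\ref{lem:family-to-matchings} is trivial to compute (any enumeration of the occurrences of $\h{1}$ works), and each edge weight $w_1(\h{t}, u)$ or $w_2(\h{t}, u)$ is a sum of at most $b$ (respectively $d$) values of $f$ (respectively $g$), so the whole instance can be written down in time polynomial in $a + b + c + d$.

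Because the lemma is essentially a repackaging of Lemma~\ref{lem:family-to-matchings}, there is no genuine obstacle: the substantive content—constructing a bipartite graph whose matching weights realize an arbitrary $f$-family, which in turn relies on Lemma~\ref{lem:bijection}—has already been handled. The proof itself will be no more than a couple of sentences invoking the previous lemma twice and noting that the YES/NO conditions line up.
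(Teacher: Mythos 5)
Your proposal is correct and follows exactly the paper's route: the paper derives this lemma as an immediate double application of Lemma~\ref{lem:family-to-matchings} (once for $f$, once for $g$), which is precisely what you do, with your brief polynomial-time check being a harmless (and welcome) elaboration of what the paper leaves implicit.
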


Together with Lemma~\ref{lem:ThreeSAT-to-FInt} we obtain the following
theorem.

\begin{theorem} \label{thm:ThreeSAT-to-CMW}
There is a polynomial time reduction from a given instance of \ThreeSAT with
$n$ variables and $m$ clauses
into an instance of \CMW with $|V(G_1)| = O\of{\ndivlog{n}},$
$|V(G_2)| = O\of{\ndivlog{m}}$
and the maximum matching weights bounded by $2^{3m}.$
\end{theorem}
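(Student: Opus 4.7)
The plan is to obtain Theorem~\ref{thm:ThreeSAT-to-CMW} by a direct composition of the two reductions that have already been established in this section. Concretely, I would first invoke Lemma~\ref{lem:ThreeSAT-to-FInt} on the given \ThreeSAT instance with $n$ variables and $m$ clauses, producing in polynomial time an instance $(f,g)$ of \FInt with $f:\feln{n}{2}$ and $g:\feln{m}{7}$, together with the promise $\max X_f, \max X_g < 2^{3m}$. I would then feed this instance into the reduction of Lemma~\ref{lem:FInt-to-CMW}, producing in polynomial time a \CMW instance $(G_1,G_2)$ whose perfect matchings realize exactly the weight sets $X_f$ and $X_g$. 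Because the equivalence $X_f\cap X_g\neq\emptyset$ is preserved by both reductions, the resulting \CMW instance is equivalent to the original \ThreeSAT instance.

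The only nontrivial checks are the quantitative bounds claimed in the theorem statement. For the vertex counts, I would plug the parameters of Lemma~\ref{lem:ThreeSAT-to-FInt} into the bounds of Lemma~\ref{lem:FInt-to-CMW}. With $a=n$ and $b=2$ we obtain
\[
  |V(G_1)| = O\!\of{\frac{n\cdot 2^2\cdot\log 2}{\log n + \log 2}} = O\!\of{\ndivlog{n}},
\]
and with $c=m$, $d=7$ we obtain
\[
  |V(G_2)| = O\!\of{\frac{m\cdot 7^2\cdot\log 7}{\log m + \log 7}} = O\!\of{\ndivlog{m}},
\]
since $b$ and $d$ are constants. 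For the weight bound, by Lemma~\ref{lem:FInt-to-CMW} every perfect matching of $G_1$ has weight in $X_f$ and every perfect matching of $G_2$ has weight in $X_g$, and Lemma~\ref{lem:ThreeSAT-to-FInt} already guarantees $\max X_f,\max X_g < 2^{3m}$, so the maximum matching weight in either graph is strictly less than $2^{3m}$.

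There is essentially no obstacle here: the substantive work was carried out in proving Lemmas~\ref{lem:ThreeSAT-to-FInt} and~\ref{lem:FInt-to-CMW}. The composition is a routine chaining, and the only step requiring attention is the simplification of the logarithmic fractions, which reduces to observing that $b=2$ and $d=7$ are constants so that the factors $b^2\log b$ and $d^2\log d$ disappear into the $O$-notation, and that $\log n + \log b = \Theta(\log n)$ for $n$ large. The reduction is polynomial time because both underlying reductions are.
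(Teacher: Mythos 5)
Your proof is correct and follows exactly the paper's route: the theorem is obtained by composing Lemma~\ref{lem:ThreeSAT-to-FInt} with Lemma~\ref{lem:FInt-to-CMW}, specializing the vertex-count bounds at $a=n$, $b=2$ and $c=m$, $d=7$, and carrying over the $2^{3m}$ bound on $\max X_f$, $\max X_g$ to the matching weights. The paper states this composition without writing out the arithmetic; your version just makes those routine checks explicit.
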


A commonly know corollary of the Sparsification Lemma from \cite{PZ01} is:

\begin{corollary} \label{cor:ThreeSAT-hardness}
There is no algorithm solving \ThreeSAT in $2^{o(n + m)}$-time
where $n$ is the number of variables
and $m$ is the number of clauses
unless ETH fails.
\end{corollary}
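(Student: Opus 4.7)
The plan is to derive the bound directly from the Sparsification Lemma of Impagliazzo, Paturi and Zane~\cite{PZ01}. Recall that this lemma produces, for every $\epsilon>0$, a constant $C_\epsilon$ and an algorithm that in time $2^{\epsilon n}\cdot\poly{n}$ turns any \ThreeSAT instance $\varphi$ on $n$ variables into a list $\varphi_1,\ldots,\varphi_t$ of at most $t\le 2^{\epsilon n}$ \ThreeSAT instances, each on the same $n$ variables and with at most $C_\epsilon n$ clauses, such that $\varphi$ is satisfiable if and only if at least one $\varphi_i$ is. This is exactly the tool that converts a bound in $n+m$ into a bound in $n$ alone, since on the sparsified side $m\le C_\epsilon n$ is linear in $n$.

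Assume for contradiction that some algorithm solves \ThreeSAT in time $2^{g(n+m)}$ for a function $g(k)=o(k)$. Given an arbitrary instance on $n$ variables, first apply the Sparsification Lemma with parameter $\epsilon$ to obtain at most $2^{\epsilon n}$ instances, each on $n$ variables and at most $C_\epsilon n$ clauses. Running the assumed algorithm on every produced instance costs $2^{g((1+C_\epsilon)n)}=2^{o(n)}$ per call (for fixed $\epsilon$, since $g(k)=o(k)$ and $C_\epsilon$ is a constant). Adding the cost of the sparsification itself, the total running time is bounded by $2^{\epsilon n+o(n)}$.

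Because $\epsilon>0$ was arbitrary, a standard diagonalization now yields a $2^{o(n)}$-time algorithm for \ThreeSAT, contradicting ETH: for every $\delta>0$ we choose $\epsilon=\delta/2$ and obtain a running time at most $2^{\delta n}$ for all sufficiently large $n$, and then combining these bounds over a slowly growing family $\delta=\delta(n)\to 0$ gives overall $2^{o(n)}$. The argument is essentially routine, with no real obstacle; the only place that needs care is the little-$o$ bookkeeping, in particular verifying that the substitution $m\mapsto C_\epsilon n$, with $C_\epsilon$ treated as a constant depending only on $\epsilon$, keeps the exponent in the running time of the hypothetical algorithm in $o(n)$.
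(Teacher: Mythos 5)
Your proof is correct and is exactly the argument the paper has in mind: the paper states this corollary without proof, citing it as a standard consequence of the Sparsification Lemma of Impagliazzo, Paturi and Zane, and your sparsify-then-solve argument (with the routine handling of the arbitrary $\epsilon$) is the standard derivation.
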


Using Corollary~\ref{cor:ThreeSAT-hardness}
we can prove the following lower bound.

\begin{corollary} \label{cor:CMW-hardness}
There is no algorithm solving \CMW in $2^{o(n\log n)}\poly{r}$-time
where $n$ is the total number of vertices,
and $r$ is the bit size of the input,
unless ETH fails.
\end{corollary}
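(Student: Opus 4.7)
The plan is a direct reduction argument combining Theorem~\ref{thm:ThreeSAT-to-CMW} with Corollary~\ref{cor:ThreeSAT-hardness}. Suppose, for contradiction, that there is an algorithm $A$ that solves \CMW in time $2^{o(N\log N)}\poly{r}$, where $N$ is the total number of vertices and $r$ is the bit size of the input. I want to use $A$ to build a $2^{o(n+m)}$-time algorithm for \ThreeSAT, which contradicts Corollary~\ref{cor:ThreeSAT-hardness}.

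Given a \ThreeSAT instance $\varphi$ with $n$ variables and $m$ clauses, first apply Theorem~\ref{thm:ThreeSAT-to-CMW} in polynomial time to produce an equivalent \CMW instance $J = (G_1, G_2)$ with $|V(G_1)| = O\of{\ndivlog{n}}$ and $|V(G_2)| = O\of{\ndivlog{m}}$, and all edge weights bounded by $2^{3m}$. Let $s = n + m$. Then the total number of vertices satisfies
\[
  N \;=\; |V(G_1)| + |V(G_2)| \;=\; O\of{\ndivlog{s}},
\]
and each edge weight fits in $O(m) = O(s)$ bits, so the total bit size of $J$ is $r = \poly{s}$.

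Then run $A$ on $J$. The key calculation is to verify that $N\log N = O(s)$: since $N = O\of{\ndivlog{s}}$, we have
\[
  N\log N \;=\; O\of{\ndivlog{s}} \cdot \log\of{O\of{\ndivlog{s}}} \;=\; O\of{\ndivlog{s}} \cdot O(\log s) \;=\; O(s).
\]
Consequently the total running time of the composed algorithm is
\[
  \poly{s} + 2^{o(N\log N)}\poly{r} \;=\; 2^{o(s)}\poly{s} \;=\; 2^{o(n+m)},
\]
contradicting Corollary~\ref{cor:ThreeSAT-hardness}.

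The only real obstacle is the calculation $N\log N = O(s)$, which is the whole point of the logarithmic compression achieved in Lemma~\ref{lem:family-to-matchings}; everything else is bookkeeping, namely checking that the \FInt parameters from Lemma~\ref{lem:ThreeSAT-to-FInt} feed into Lemma~\ref{lem:FInt-to-CMW} with the claimed vertex counts, and that the weight bound $2^{3m}$ keeps $r$ polynomial in $s$ so that the $\poly{r}$ factor is absorbed into $2^{o(n+m)}$.
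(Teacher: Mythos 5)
Your proposal is correct and follows essentially the same route as the paper: apply the reduction of Theorem~\ref{thm:ThreeSAT-to-CMW}, observe that the resulting instance has $O\of{\frac{n+m}{\log(n+m)}}$ vertices and polynomial bit size, and conclude that a $2^{o(N\log N)}\poly{r}$ algorithm for \CMW would yield a $2^{o(n+m)}$ algorithm for \ThreeSAT, contradicting Corollary~\ref{cor:ThreeSAT-hardness}. The only difference is cosmetic bookkeeping (introducing $s=n+m$ and stating $N\log N = O(s)$ explicitly), which matches the paper's calculation.
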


\begin{proof}
For a given instance of \ThreeSAT with $n$ variables and $m$ clauses we can use
the reduction from Theorem~\ref{thm:ThreeSAT-to-CMW} to obtain an instance of
\CA.
The total number of the vertices in the new instance is
$$|V(G_1)| + |V(G_2)| =  O\of{\ndivlog{n} + \ndivlog{m}}
  = O\of{\frac{n + m}{\log\of{n + m}} + \frac{n + m}{\log\of{n + m}}}
  = O\of{\frac{n + m}{\log\of{n + m}}}$$
because the function $\ndivlog{n}$ is nondecreasing for the sufficiently big
values of $n.$
Weights of the matchings are bounded by $2^{3m}$ and the bit size
of the instance
$$r = O\of{\p{\frac{n + m}{\log\of{n + m}}}^2 \log 2^{3m}} = \poly{nm}.$$
Then let us assume that there is an algorithm solving \CMW in
$2^{o(n\log n)} \poly{r}$-time.
Then we could solve our instance in time
$$2^{o\of{\frac{n + m}{\log(n + m)}
    \log\of{\frac{n + m}{\log(n + m)}}}} \poly{\poly{nm}}
= 2^{o\of{\frac{n + m}{\log(n + m)}
    \log(n + m)}} \poly{nm}
= 2^{o(n + m)}
$$
which contradicts ETH by Corollary~\ref{cor:ThreeSAT-hardness}.
\end{proof}

\section{Hardness of \CA}
\label{sec:CA}

Consider two weighted
full bipartite graphs $G_1$ and $G_2.$ We would like to encode them
in a \CA instance in such a way that this \CA instance is a YES-instance
if and only if there are two perfect matchings,
one in $G_1$ and the other in $G_2,$ of the same weight. 

Consider an instance $I=(V,d,s)$ of \CA.
We say that $c:V\to\Z$ is a YES-coloring if $c$ is a proper coloring and has
span at most $s.$
Note that an instance of \CA is a YES-instance
if and only if it has a YES-coloring.

Our approach is that we encode those graphs $G_1$ and $G_2$
separately in such a way
that we have a special vertex $v_M$ whose color in every YES-coloring
represents a weight of some perfect matching in $G_1$
and on the other hand in every YES-coloring
its color represents (in a similar way) a weight
of some perfect matching in $G_2.$
So a YES-coloring
coloring would be possible if and only if the graphs $G_1$ and $G_2$
have two perfect matchings, one in $G_1$ and the other in $G_2$, with equal
weights.

Before we present a way to encode a weighted full bipartite graph in a
\CA instance we would like to present the two lemmas to merge those two encoded
graphs into a one instance of \CA.
In order to do that we use the following concepts.

We say that instance $I$ is \emph{$(x, y)$-spanned}
for some vertices $x, y\in V$
if for every YES-coloring $c$ of $I$
we have $\abs{c\of{x} - c\of{y}} = s - 1.$

We say that an instance $I = (V, d, s)$ of \CA is \emph{$(X, Y)$-spanned}
for some nonempty subsets of the vertices $\emptyset\neq X, Y\sseq V$
if it is $(x, y)$-spanned for every two vertices $x\in X$ and $y\in Y.$

\begin{lemma} \label{lem:ca-merge} ($\bigstar$)
For every $(u, v)$-spanned instance $I_1=(V_1, d_1, s)$ and
$(w, z)$-spanned instance $I_2=(V_2, d_2, s)$
of \CA
there is a $(\set{u,w}, \set{v,z})$-spanned instance $I=(V_1 \cup V_2, d, s)$
of \CA
such that
\begin{enumerate}[(i)]
  \item for every YES-coloring $c$ of $I$
    the coloring $c\mid_{V_1}$ is a YES-coloring of $I_1$
    and the coloring $c\mid_{V_2}$ is a YES-coloring of $I_2,$
  \item for every YES-coloring $c_1$ of $I_1$
    and every YES-coloring $c_2$ of $I_2$
    such that $c_1\of{u} = c_2\of{w},$ $c_1\of{v} = c_2\of{z}$ and
    for every $x\in V_1 \cap V_2$ we have $c_1\of{x} = c_2\of{x}$
    there exists a YES-coloring $c$ of $I$
    such that $c\mid_{V_1} = c_1$ and $c\mid_{V_2} = c_2.$
\end{enumerate}
\end{lemma}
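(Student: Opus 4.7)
The plan is to construct the merged instance $I = \of{V_1 \cup V_2, d, s}$ by defining the weight of each unordered pair via $d\of{x, y} := \max\of{d_1\of{x, y}, d_2\of{x, y}}$ (extending each $d_i$ by zero outside $V_i^2$) and then raising the weight of the single pair $\set{u, z}$ up to $s - 1$. The intuition is that each input instance being spanned already forces its distinguished pair to occupy the two extreme colors in any YES-coloring; hence in $I$ both $\set{c\of{u}, c\of{v}}$ and $\set{c\of{w}, c\of{z}}$ must coincide with $\set{\min_V c, \max_V c}$. The only remaining freedom is whether the two subinstances align ($c\of{u} = c\of{w}$, $c\of{v} = c\of{z}$) or cross ($c\of{u} = c\of{z}$, $c\of{v} = c\of{w}$); the single extra weight-$\of{s-1}$ edge rules out the crossed alternative, which is exactly what is needed.

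For the spannedness together with (i): given a YES-coloring $c$ of $I$, each restriction $c\mid_{V_i}$ is a YES-coloring of $I_i$ because $d \geq d_i$ on $V_i^2$ and the span bound $s$ is shared; this is exactly (i). The spannedness of $I_i$ then yields $\abs{c\of{u} - c\of{v}} = \abs{c\of{w} - c\of{z}} = s - 1$, so both pairs coincide with $\set{\min_V c, \max_V c}$. The added edge forces $\abs{c\of{u} - c\of{z}} \geq s - 1$, which puts $c\of{u}$ and $c\of{z}$ at distinct extremes; this pins $c\of{u} = c\of{w}$ and $c\of{v} = c\of{z}$, and therefore $\abs{c\of{x} - c\of{y}} = s - 1$ for every $x \in \set{u, w}$ and $y \in \set{v, z}$, proving that $I$ is $\of{\set{u, w}, \set{v, z}}$-spanned.

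For (ii): given compatible YES-colorings $c_1, c_2$, define $c$ by $c\of{x} := c_1\of{x}$ for $x \in V_1$ and $c\of{x} := c_2\of{x}$ for $x \in V_2$, which is well-defined on $V_1 \cap V_2$ by hypothesis. All $d_1$- and $d_2$-constraints are inherited. The single added constraint becomes $\abs{c_1\of{u} - c_2\of{z}} = \abs{c_2\of{w} - c_2\of{z}} = s - 1$, using $c_1\of{u} = c_2\of{w}$ and the $\of{w, z}$-spannedness of $I_2$. Finally, since $c_1$ and $c_2$ attain the same two extreme color values (at $u, v$ and at $w, z$ respectively), their union has span exactly $s$.

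The main subtlety is just convincing oneself that a single extra edge is precisely enough: once one accepts that each spanned subinstance pins its distinguished pair to the color-range extremes, the remaining case analysis is routine bookkeeping. A few degenerate configurations (when some of $u, v, w, z$ coincide or lie in $V_1 \cap V_2$) should be checked separately, but in each such case the relevant aligning equality holds automatically and no extra edge is required.
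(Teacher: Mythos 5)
Your construction is essentially the paper's: you take $d=\max\set{d_1, d_2}$ (extended by zero) on $V_1\cup V_2$ and add a spanning constraint of weight $s-1$, and the verifications of (i) and (ii) proceed exactly as in the paper. The only real difference is that the paper puts the weight $s-1$ on all four pairs in $\set{u,w}\times\set{v,z}$, so that $(\set{u,w},\set{v,z})$-spannedness holds by construction, whereas you put it only on $\set{u,z}$ and then derive the other three equalities from (i) together with the spannedness of $I_1$, $I_2$ and the span bound; that derivation is correct, so in the generic case (all of $u,v,w,z$ distinct) your proof goes through, and it is a slightly more economical variant of the same argument.

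The one flaw is your closing remark that in degenerate configurations ``the relevant aligning equality holds automatically and no extra edge is required.'' For $u=w$ or $v=z$ that is true, but for $u=z$ or $v=w$ (with $s>1$) it is exactly backwards: there the required spannedness can only hold if $I$ has no YES-coloring at all, yet the plain max-merge without the extra constraint may well have one. For instance, with $w=v$ take $V_1=\set{u,v}$, $d_1\of{u,v}=s-1$ and $V_2=\set{v,z}$, $d_2\of{v,z}=s-1$; the max-merge admits the YES-coloring $c\of{v}=1$, $c\of{u}=c\of{z}=s$, which violates $\abs{c\of{u}-c\of{z}}=s-1$. The remedy is the opposite of what you propose: keep the extra constraint(s) in these cases as well, as the paper does by constraining every pair in $\set{u,w}\times\set{v,z}$. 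Then the $w=v$ case becomes infeasible thanks to $d\of{u,z}=s-1$, and the $u=z$ case becomes infeasible because the constraint lands on the pair $\of{u,u}$ and cannot be satisfied; in both situations spannedness and (i) hold vacuously, and (ii) is vacuous too, since its hypotheses would force $c_2\of{w}=c_2\of{z}$, contradicting the $(w,z)$-spannedness of $I_2$.
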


\begin{lemma} \label{lem:ca-extend} ($\bigstar$)
For every $(v_L, v_R)$-spanned instance $I=(V, d, s)$ of \CA and
for every numbers $l, r \in \N$ there exists
a $(w_L, w_R)$-spanned instance $I'=(V \cup \set{w_L, w_R}, d', l + s + r)$
such that
\begin{enumerate}[(i)]
  \item for every YES-coloring $c$ of $I$
    there is a YES-coloring $c'$ of $I'$ such that $c'\mid_V = c,$
  \item for every YES-coloring $c'$ of $I'$ such that
    $c'\of{w_L} \leq c'\of{w_R}$ we have that 
    \begin{itemize}
      \item a coloring $c'\mid_V$ is a YES-coloring of $I,$
      \item $c'\of{v_L} = c'\of{w_L} + l$ and
        $c'\of{v_R} = c'\of{w_R} - r.$
    \end{itemize}
\end{enumerate}
\end{lemma}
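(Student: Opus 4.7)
The plan is to extend $I$ to $I'=(V\cup\{w_L,w_R\},d',l+s+r)$ by keeping $d'(x,y)=d(x,y)$ for $x,y\in V$ and adding just the following weights from $w_L,w_R$: $d'(w_L,w_R)=l+s+r-1$ to force $w_L,w_R$ to the two extremes of the color range, the near pairs $d'(w_L,v_L)=l$ and $d'(w_R,v_R)=r$ to pull $v_L,v_R$ towards their anchors, the far pairs $d'(w_L,v_R)=l+s-1$ and $d'(w_R,v_L)=s+r-1$ to push them apart by exactly $s-1$, and finally the bulk weights $d'(w_L,x)=l$, $d'(w_R,x)=r$ for every $x\in V\setminus\{v_L,v_R\}$ so as to confine the rest of $V$ to the interior window.

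To verify (i), I would take any YES-coloring $c$ of $I$. The $(v_L,v_R)$-spannedness gives $|c(v_L)-c(v_R)|=s-1$, and combined with the span bound this forces $\{c(v_L),c(v_R)\}=\{\min_V c,\max_V c\}$. If $c(v_L)<c(v_R)$, I extend by $c'(w_L)=c(v_L)-l$ and $c'(w_R)=c(v_R)+r$; otherwise by $c'(w_L)=c(v_L)+l$ and $c'(w_R)=c(v_R)-r$. A direct check then confirms that each of the five new constraints is satisfied with equality on the pairs among $\{w_L,w_R,v_L,v_R\}$, that the bulk constraints have slack at least $l$ or $r$ (because $c(x)$ always lies between $c(v_L)$ and $c(v_R)$), and that the span is exactly $l+s+r$.

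For (ii), suppose $c'$ is a YES-coloring of $I'$ with $c'(w_L)\le c'(w_R)$. The span bound gives $c'(w_R)-c'(w_L)\le l+s+r-1$, while $d'(w_L,w_R)=l+s+r-1$ gives the reverse inequality, so equality holds, $c'(w_L)$ and $c'(w_R)$ are the extremes of the coloring, and $I'$ is $(w_L,w_R)$-spanned. Then $d'(w_L,v_L)=l$ and $d'(w_R,v_L)=s+r-1$ squeeze $c'(v_L)$ between $c'(w_L)+l$ and $c'(w_R)-(s+r-1)=c'(w_L)+l$, forcing $c'(v_L)=c'(w_L)+l$; symmetrically $c'(v_R)=c'(w_R)-r$. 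The main obstacle in the whole proof is to show that $c'|_V$ is a YES-coloring of $I$: the internal constraints are copied verbatim from $d$ into $d'$, so the only real issue is the span of $c'|_V$. This is precisely the purpose of the bulk weights $d'(w_L,x)=l$ and $d'(w_R,x)=r$, which force every $x\in V$ into the window $[c'(w_L)+l,c'(w_R)-r]=[c'(v_L),c'(v_R)]$, an interval of length $s-1$; hence $c'|_V$ has span exactly $s$ and is a YES-coloring of $I$, completing the argument.
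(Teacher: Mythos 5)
Your construction is correct, and its skeleton is the same as the paper's: copy $d$ inside $V$, add $w_L,w_R$ with $d'(w_L,w_R)=l+s+r-1$ so that they are forced to the two extremes, and give every old vertex distance at least $l$ from $w_L$ and $r$ from $w_R$ so that $V$ is confined to a window of length $s-1$; properness of $c'\mid_V$ is inherited and the span bookkeeping is identical. The genuine difference is your two asymmetric ``far'' constraints $d'(w_L,v_R)=l+s-1$ and $d'(w_R,v_L)=s+r-1$. The paper's $d'$ is fully symmetric (it simply sets $d'(w_L,x)=l$ and $d'(w_R,x)=r$ for \emph{all} $x\in V$, with no special treatment of $v_L,v_R$), and then can only conclude that $c'(v_L),c'(v_R)$ occupy the two ends of the window: since a mirror image of a YES-coloring of $I$ is again a YES-coloring and extends to a YES-coloring of the paper's $I'$ with $c'(w_L)\leq c'(w_R)$ but $c'(v_L)=c'(w_R)-r$, the paper's construction yields (ii) only up to swapping the roles of the two ends — an orientation issue that the paper absorbs implicitly into its ``by symmetry we may assume'' conventions used elsewhere (e.g.\ in the proof of Lemma~\ref{lem:matchings-to-ca}). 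Your extra pairs pin $c'(v_L)=c'(w_L)+l$ and $c'(v_R)=c'(w_R)-r$ exactly as stated, and your two-case extension in (i) (placing $w_L$ below or above depending on the sign of $c(v_L)-c(v_R)$) shows this costs nothing. So your version buys the literal statement of (ii) and is, if anything, slightly more careful than the paper's one-line construction, which buys brevity.
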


The proofs of these two lemmas are straightforward.
They are moved to the Appendix due to space limitations.

\begin{lemma} \label{lem:matchings-to-ca}
Let $G = (V_1 \cup V_2, E, w)$ be a weighted full bipartite graph
with nonnegative weights and such that $\pow{V_1} = \pow{V_2}.$
Let $n = \pow{V_1},$
$m = \max_{e\in E} w(e),$
$M = n \cdot m + 1,$
$l = \p{4n - 1} \cdot M$ and
$s = (8n - 1) \cdot M.$
There exists a $(v_L, v_R)$-spanned instance $I=(V, d, s)$ of \CA
with $\pow{V} = O\of{n}$ and
such that for some vertex $v_M \in V,$
\begin{enumerate}[(i)]
  \item for every YES-coloring $c$ of $I$
    such that $c\of{v_L} \leq c\of{v_R}$
    there exists a perfect matching $M_G$ in $G$ such that
    $c\of{v_M} = c\of{v_L} + l + w\of{M_G}$ and
  \item for every perfect matching $M_G$ in $G$ there exists a YES-coloring
    $c$ of $I$ such that $c\of{v_L} \leq \of{v_R}$ and
    $c\of{v_M} = c\of{v_L} + l + w\of{M_G}.$
\end{enumerate}
\end{lemma}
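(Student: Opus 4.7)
The plan is to build $I$ in three layers: a rigid outer frame pinning $v_L$ and $v_R$ at the extremes of the span, a matching-selector gadget with $O(n)$ auxiliary vertices in between, and an accumulator vertex $v_M$ whose color picks up the total matching weight. I set $d(v_L, v_R) = s - 1$, so that $I$ is automatically $(v_L, v_R)$-spanned, and assume throughout that $c(v_L) \le c(v_R)$. The interval $[c(v_L), c(v_R)]$ then has length exactly $s - 1 = (8n-1)M - 1$, which I partition into $8n - 1$ disjoint slots of length $M = nm + 1$. Additional anchor vertices are placed at the slot boundaries and joined to $v_L$ and $v_R$ by weights that force each anchor into a specific slot, yielding a rigid grid in which the remaining gadget vertices can live.

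For the matching selector, for each $i \in \el{n}$ I introduce a source vertex $p_i$ sitting in a designated slot on the left half of the frame, and a sink vertex $q_j$ for each $j \in \el{n}$ sitting in a designated slot on the right half. The anchors pin each $p_i$ into its own slot but leave $n$ admissible within-slot positions, which I interpret as encoding a function $\sigma : \el{n} \to \el{n}$; analogously the $q_j$'s encode a function $\tau : \el{n} \to \el{n}$. Pairwise weights $d(p_i, q_j)$, depending on $w(a_i, b_j)$ and the block geometry, are chosen so that the CA constraint between $p_i$ and $q_j$ is satisfiable exactly when $\tau = \sigma^{-1}$, i.e., $\sigma$ is a permutation. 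The accumulator $v_M$ is added with $d(v_L, v_M) \ge l$ and $d(v_M, v_R) \ge s - l - M$, which confines $c(v_M)$ to the window $[c(v_L) + l, c(v_L) + l + M - 1]$; weights $d(v_M, p_i)$ are then chosen so that the exact offset of $v_M$ inside this window equals $\sum_i w(a_i, b_{\sigma(i)}) = w(M_G)$, yielding $c(v_M) = c(v_L) + l + w(M_G)$ as required.

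The main obstacle is the bijection constraint. CA constraints naturally encode pairwise distances, so forcing the $n$ values $\sigma(1), \ldots, \sigma(n)$ to form a permutation, rather than an arbitrary function, is the core subtlety. The mirror structure on the $q$-side is the device meant to enforce it: a non-injective $\sigma$ would push two $q$-vertices into the same within-slot position, producing a distance violation with a nearby anchor. Once the rigidity of the frame and the injectivity of $\sigma$ are established, direction (i) of the lemma follows by reading off $\sigma$ from the coloring and verifying the arithmetic on $v_M$, while direction (ii) follows by explicitly constructing the coloring from any given perfect matching $M_G$ of $G$.
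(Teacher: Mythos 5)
There is a genuine gap, and it sits exactly at the step you flag as "the main obstacle." In your design every selector vertex is pinned to its \emph{own} slot of the rigid grid, and the permutation is supposed to be read off from within-slot offsets $\sigma(i)$ of the $p_i$'s and $\tau(j)$ of the $q_j$'s, enforced by the pairwise constraints $d(p_i,q_j)$. But once the frame is rigid and the $p$'s live in the left half and the $q$'s in the right half, the sign of every difference $c(q_j)-c(p_i)$, $c(p_{i'})-c(p_i)$, $c(q_{j'})-c(q_j)$, and of every difference against an anchor is fixed for all YES-colorings, so each \CA constraint $\abs{c(x)-c(y)}\ge d(x,y)$ degenerates into a one-sided difference constraint. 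The solution set of a system of difference constraints is closed under componentwise maximum (and minimum), while the target set $\set{(\sigma,\sigma^{-1}):\sigma \text{ a permutation of } \el{n}}$ is not: already for $n=2$ the componentwise max of $(\mathrm{id},\mathrm{id})$ and the transposition pair is $\sigma=\tau=(2,2)$, which is not a permutation but still satisfies every lower bound satisfied by both. Hence no choice of the weights $d(p_i,q_j)$ (nor of per-vertex bounds against anchors) can make feasibility equivalent to "$\tau=\sigma^{-1}$"; the sentence about two $q$-vertices colliding "in the same within-slot position" does not apply, since your $q$'s occupy distinct slots. A second, related gap is the accumulator: with $v_M$ confined by rigid anchors to a window of length $M$, the constraints $d(v_M,p_i)$ give $c(v_M)$ a \emph{maximum} of $n$ individual lower bounds, not the \emph{sum} $\sum_i w(a_i,b_{\sigma(i)})$, and nothing forces equality rather than inequality; no mechanism for summation is provided.

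The paper's construction avoids both problems precisely by \emph{not} making the frame rigid and by \emph{not} pinning the selectors. The chain $v_1,\dots,v_{4n}$ has elastic gaps (between $2M$ and $2M+nm$), each selector $a_j$ may fall into any of the $n$ left intervals --- this combinatorial "which interval" choice, i.e.\ the disjunction hidden in the absolute value, is what encodes the permutation, and $d(a_i,a_j)=4M$ plus pigeonhole makes the assignment a bijection. Each left interval stretches by exactly the weight $w(v^{(1)}_{\pi(i)},v^{(2)}_i)$ of one matched edge, the mirrored $b$-vertices carry the complementary weights $M+m-w(\cdot,\cdot)$ so that the minimum total length of the chain is exactly $s-1$ independent of $\pi$; tightness of the overall span then forces every gap to its minimum, and the matching weight is read off telescopically as the offset of the middle vertex $v_M=w_n$, giving the exact equality $c(v_M)=c(v_L)+l+w(M_\pi)$. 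Finally $d(a_i,b_i)=4nM$ forces the left and right permutations to coincide. If you want to salvage your scheme, you would have to reintroduce this disjunctive freedom (selectors choosing among several blocks, or elastic gaps that chain the weights), since a fully rigid grid with fixed-slot selectors provably cannot express the permutation constraint or the summation.
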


\begin{figure}[t]
\begin{center}
  \def\svgwidth{5in}
  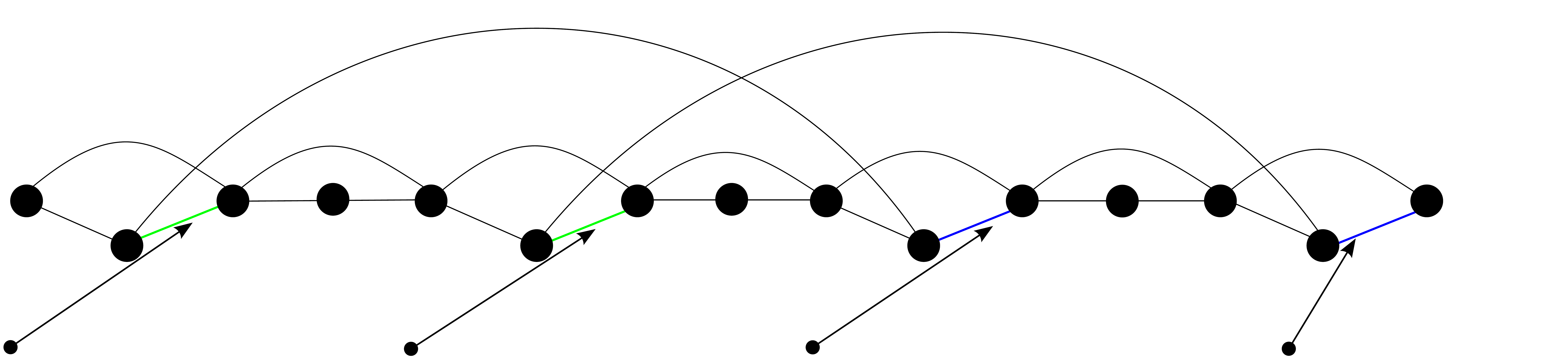
  \vspace{0.2cm}
  \caption{
    A weighted full bipartite graph $(G, w)$ with $|V_1| = |V_2| = 2$
    encoded in a \CA form.
    The color of the vertex $v_M = w_2$ corresponds to the weight of the
    perfect matching in $G$ given by the permutation $\pi$
    and is equal to $c\of{v_M} = c\of{v_L} + 7M + w\of{M_{\pi}}.$
    The picture is simplified.
    Some of the edges and corresponding to them minimum distances
    are omitted in the picture.
  }
  \label{fig:matching-to-ca}
\end{center}
\vspace{-1cm}
\end{figure}

\newcommand{\vind}[3]{{#1}^{\p{#2}}_{#3}}
\newcommand{\vv}[2]{\vind{v}{#1}{#2}}
\newcommand{\ww}[2]{\vind{w}{#1}{#2}}

\newcommand{\maxv}{4n}
\newcommand{\seqv}{\seq{v}{\maxv}}
\newcommand{\elv}{\el{\maxv}}

\newcommand{\maxw}{2n - 1}
\newcommand{\seqw}{\seq{w}{\maxw}}
\newcommand{\elw}{\el{\maxw}}

\newcommand{\maxa}{n}
\newcommand{\seqa}{\seq{a}{\maxa}}
\newcommand{\ela}{\el{\maxa}}

\newcommand{\maxb}{n}
\newcommand{\seqb}{\seq{b}{\maxb}}
\newcommand{\elb}{\el{\maxa}}

\begin{proof}
Let $V_1 = \set{\vv{1}{1}, \vv{1}{2}, \ldots, \vv{1}{n}}$
and $V_2 = \set{\vv{2}{1}, \vv{2}{2}, \ldots, \vv{2}{n}}.$
We will build our \CA instance step by step.
A simplified picture of
the instance can be found in Figure~\ref{fig:matching-to-ca}.

Let us introduce the vertices $v_L = \seqv = v_R$ to set $V.$
Because of the symmetry we can assume that for every coloring $c$ of our
instance we have $c\of{v_L} \leq c\of{v_R}.$

We set the minimum distance $d\of{v_L, v_R} = s - 1.$
Then for every YES-coloring $c$ we have that
$\abs{c\of{v_L} - c\of{v_R}} = s - 1$ so our instance is $(v_L, v_R)$-spanned.

For every $i, j \in \elv$ such that $i \neq j$ and
$\set{i, j} \neq \set{1, \maxv}$
we set the minimum distance
$d\of{v_i, v_j} = \abs{i - j} \cdot 2M.$
Then we can prove the following claim.

{\bf Claim 1}
For every YES-coloring $c$ and for every
$i < j$ we have that $c\of{v_i} < c\of{v_j}.$

{\em Proof of the claim:}
Indeed if the colors $c\of{v_1}, c\of{v_2}, \ldots, c\of{v_{\maxv}}$
are not strictly increasing or strictly decreasing then the distances
between the consecutive colors of the set $c\of{\set{\seqv}}$
are at least $2M$ and at least one of them is at least $4M.$
So it would need to use at least
$\p{\maxv - 2} \cdot 2M + 4M + 1 > (8n - 1) \cdot M = s$
colors.
In addition we have assumed that for all colorings
$c\of{v_L} \leq c\of{v_R}$ so the colors are strictly increasing.
This proves the claim.

Note that for every YES-coloring and for every $i \in \el{\maxv - 1}$
we have
\begin{equation} \label{eq:vdists}
  2M \leq c\of{v_{i + 1}} - c\of{v_i} \leq 2M + n\cdot m < 3M,
\end{equation}
for otherwise $c(v_R) - c(v_L) \ge (4n - 2) \cdot 2M + 2M + n \cdot m + 1 = s,$
so $c$ has the $\myspan$ at least $s + 1,$ a contradiction.

Let us introduce new vertices $\seqw$ to set $V.$
For every $i\in\elw$ and $j\in\elv$
we set the minimum distances
$d\of{w_i, v_j} = \abs{4i + 1 - 2j} \cdot M.$
For every YES-coloring $c$ and for every $i\in\elw$
we have
$c\of{v_{2i}} + M \leq c\of{w_i} \leq c\of{v_{2i + 1}} - M$
by \eqref{eq:vdists},
for otherwise we have that $c(v_j)\leq c(w_i) \leq c(v_{j + 1})$
for some $j \neq 2i$
(because $c(v_{\maxv}) - c(v_1) = s - 1$ so every YES-coloring uses only
colors from the interval $[c(v_1), c(v_{\maxv})]$)
and then
$c(v_{j + 1}) - c(v_j) \ge d(v_j, w_i) + d(w_i, v_{j + 1})$
and $\{v_j, v_{j + 1}\} \ne \{v_{2i}, v_{2i + 1}\}$
so at least one of these two distances is at least $3M$
and therefore $c(v_{j + 1}) - c(v_j) \ge 3M + M > 3M,$ a contradiction
with \eqref{eq:vdists}.
Thus infer the following claim.

{\bf Claim 2} For every YES-coloring $c$ the colors of the vertices in the
sequence
\begin{equation} \label{eq:vwseq}
  v_1, v_2, w_1, v_3, v_4, w_2, v_5\ldots,
    v_{\maxv - 2}, w_{\maxw}, v_{\maxv - 1}, v_{\maxv}
\end{equation}
are increasing.

We introduce new vertices $\seqa$ and
for every $i\in\ela$ and $j\in\elv$ we set the minimum distances
$$d\of{a_i, v_j} = \begin{cases}
  M + w\of{\vv{1}{i}, \vv{2}{j/2}} &
    \mbox{when } j \leq 2n \mbox{ and } 2 \mid j\\
  M & \mbox{when } j \leq 2n \mbox{ and } 2 \nmid j\\
  \p{j - 2n} \cdot 2M + M & \mbox{when } j > 2n.\\
\end{cases}$$
Then for every YES-coloring $c$ and for every $i\in\ela$ we have
$c\of{a_i} \leq c\of{v_{2n}}$
because in other case we have $c(v_j) \leq c(a_i) \leq c(v_{j + 1})$
for some $j \ge 2n$ and then
$c(v_{j + 1}) - c(v_j) \ge d(v_j, a_i) + d(a_i, v_{j + 1}) \ge M + 3M > 3M,$
a contradiction with \eqref{eq:vdists}.

Moreover for every $i\in\ela$ and every $j\in\elw$ we set the minimum distance
$d\of{a_i, w_j} = 2M.$
Therefore by \eqref{eq:vdists} and \eqref{eq:vwseq}
for every YES-coloring $c$ and every $i\in\ela$ the vertex $a_i$
is colored with the color from one of the intervals
$\p{c\of{v_{2j - 1}}, c\of{v_{2j}}}$ for some $j \in \el{n}.$

Finally for every $i, j \in \ela$ such that $i \neq j$
we set the minimum distance
$d\of{a_i, a_j} = 4M$ so by \eqref{eq:vdists} we know that
for every YES-coloring $c$ and every $i\in \el{n}$
exactly one one vertex $a_j$ of the vertices $\seqa$ is colored with the
color from the interval $\p{c\of{v_{2i - 1}}, c\of{v_{2i}}}.$
The assignment of vertices $\seqa$ to intervals
$(c(v_1), c(v_2)), (c(v_3), c(v_4)), \ldots, (c(v_{2n - 1}), c(v_{2n}))$
determines a permutation $\pi_c:[n]\to[n],$ i.e., $\pi_c(i) = j$
if $a_j$ gets a color from $(c(v_{2i - 1}), c(v_{2i})).$
Hence we get the following claim:

{\bf Claim 3} For every YES-coloring $c$ there is a permutation $\pi_c$ such
that the colors of the vertices of the sequence
$$v_1, a_{\pi_c\of{1}}, v_2, w_1, v_3, a_{\pi_c\of{2}}, v_4, w_2, v_5 \ldots,
  v_{2n - 1}, a_{\pi_c\of{n}}, v_{2n}$$
are increasing.

Similarly we introduce new vertices $\seqb$ and
for every $i\in\elb$ and $j\in\elv$ we set the minimum distances
$$d\of{b_i, v_j} = \begin{cases}
  \p{2n - j + 1} \cdot 2M + M & \mbox{when } j \leq 2n \\
  M + m - w\of{\vv{1}{i}, \vv{2}{j/2 - n}} &
    \mbox{when } j > 2n \mbox{ and } 2 \mid j \\
  M & \mbox{when  } j > 2n \mbox{ and } 2 \nmid j.\\
\end{cases}$$
Also for every $i\in\elb$ and every $j\in\elw$ we set the minimum distance
$d\of{b_i, w_j} = 2M$ and for every $i, j \in \elb$ such that $i \neq j$
we set the minimum distance $d\of{b_i, b_j} = 4M.$
Hence similarly as before,
for every YES-coloring $c$ and every $i\in\el{n}$ exactly one
vertex $b_j$ of the vertices $\seqb$ is colored with the color from
the interval $\p{c\of{v_{2n + 2i - 1}}, c\of{v_{2n + 2i}}}.$
Analogously as before, the colors of the vertices $\seqb$
determine a permutation $\rho_c:[n]\to[n].$
Thus we have the following claim.

{\bf Claim 4} For every YES-coloring $c$ there is a permutation $\rho_c$
such that the colors of the vertices in the sequence
$$v_{2n + 1}, b_{\rho_c\of{1}}, v_{2n + 2}, w_{n + 1},
  v_{2n + 3}, b_{\rho_c\of{2}}, v_{2n + 4}, w_{n + 2}, v_{2n + 5}\ldots
  v_{4n - 1}, b_{\rho_c\of{n}}, v_{4n}$$
are increasing.

For every $i\in\el{n}$ we set the minimum distance
$d\of{a_i, b_i} = n \cdot 4M.$
Then we know that for every YES-coloring $c$ we have
$\pi_c^{-1}\of{i} \leq \rho_c^{-1}\of{i}$
for otherwise we can take
$j = 2\pi_c^{-1}(i) - 1$ and
$k = 2n + 2\rho_c^{-1}(i)$
and then
$(c(b_i) - c(a_i)) + 2M \le c(v_j) - c(v_k)$
and $k - j \le 2n$
so the sequence $v_1, v_2, \ldots, v_j, v_k, \ldots, v_{\maxv}$
has at least $\maxv - 2n + 1 = 2n + 1$ elements
so
$ c(v_{\maxv}) - c(v_1)
\ge (2n - 1) \cdot 2M + (c(v_k) - c(v_j))
\ge (2n - 1) \cdot 2M + (c(b_i) - c(a_i)) + 2M
\ge (2n - 1) \cdot 2M + n \cdot 4M + 2M
= n \cdot 8M
> (n - 1) \cdot 8M  - 1
= s - 1,$
a contradiction.
Since $\pi_c$ and $\rho_c$ are permutations, we further infer that for every
YES-coloring $c$ we have $\pi_c = \rho_c.$
Hence we have the following claim.

{\bf Claim 5} For every YES-coloring $c$ there is a permutation $\pi_c$
such that the colors of the vertices in the sequence
$$v_1, a_{\pi_c\of{1}}, v_2, w_1, v_3, a_{\pi_c\of{2}}, v_4, w_2, v_5 \ldots,
  v_{2n - 1}, a_{\pi_c\of{n}}, v_{2n}, w_n,$$ $$
  v_{2n + 1}, b_{\pi_c\of{1}}, v_{2n + 2}, w_{n + 1},
  v_{2n + 3}, b_{\pi_c\of{2}}, v_{2n + 4}, w_{n + 2}, v_{2n + 5}\ldots
  v_{4n - 1}, b_{\pi_c\of{n}}, v_{4n}$$
are increasing.

This ends the description of the instance $I.$
Note that $I$ is $(v_L, v_R)$-spanned because $d(v_L, v_R) = s - 1.$
Let us put $v_M = w_n.$
We are going to show the following claim.

{\bf Claim 6} ($\bigstar$)
Let $\pi:\el{n}\to\el{n}$ be any permutation and
$M_{\pi} = \set{\vv{1}{i} \vv{2}{\pi\of{i}} : i\in\el{n}}$ be 
the corresponding perfect matching in $G.$
There is exactly one YES-coloring $c$ such that
$\pi_c = \pi.$
Moreover $c(v_M) = c(v_L) + l + w(M_{\pi}).$

To proove the claim it is sufficient to check all the introduced
minimum allowed distances $d$ and the $\myspan$
for the coloring implied by the sequence as in Claim 5.
This is a simple manual check but due to its length
the proof of the claim is moved to the Appendix.

Thus there is a one-to-one correspondence between permutations and
YES-colorings.
Moreover we know that for every YES-coloring $c$ we have
$c\of{v_M} = c\of{v_L} + l + w\of{M_{\pi_c}}$
where $M_{\pi_c}$ is the perfect matching in $G$ corresponding
to permutation $\pi_c.$
Hence we have shown (i) and (ii) as required.
\end{proof}

\begin{lemma} \label{lem:CMW-to-CA} ($\bigstar$)
There is a polynomial time reduction such that
for a given instance $I=(G_1, G_2)$ of \CMW
with $n_1 = \pow{V\of{G_1}}, n_2 = \pow{V\of{G_2}}$
and such that the weight functions of $G_1$ and $G_2$
are bounded by respectively $m_1$ and $m_2$
reduces it into an instance of \CA with
$O\of{n_1 + n_2}$ vertices
and the maximum edge weight in $O\of{n_1^2 m_1 + n_1^2 m_2}.$
\end{lemma}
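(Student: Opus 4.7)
The plan is to construct the CA instance in three stages: (a) use Lemma~\ref{lem:matchings-to-ca} to encode each bipartite graph $G_i$ as its own CA instance, (b) apply Lemma~\ref{lem:ca-extend} to pad both instances so they share a common span and so the special vertex $v_M$ sits at the same absolute offset from the left endpoint in both, and (c) apply Lemma~\ref{lem:ca-merge} to glue the two padded instances along three shared vertices: the two spanned endpoints and $v_M$.

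In detail, Lemma~\ref{lem:matchings-to-ca} applied to $G_i$ gives a $(v_L^{(i)},v_R^{(i)})$-spanned instance $I_i$ on $O(n_i)$ vertices with span $s_i=(8n_i-1)M_i$ and offset $l_i=(4n_i-1)M_i$, where $M_i=n_im_i+1$, together with a distinguished vertex $v_M^{(i)}$ such that in every YES-coloring $c(v_M^{(i)})=c(v_L^{(i)})+l_i+w_i(M)$ for some perfect matching $M$ of $G_i$. Set $L=\max(l_1,l_2)$ and apply Lemma~\ref{lem:ca-extend} to each $I_i$ with left extension $\lambda_i=L-l_i$ and right extension $\rho_i$ chosen so that the new spans $\lambda_i+s_i+\rho_i$ all equal a common value $S$. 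The extended instance $I_i'$ is $(w_L^{(i)},w_R^{(i)})$-spanned, and combining the two lemmas one reads off that every YES-coloring of $I_i'$ satisfies $c(v_M^{(i)})=c(w_L^{(i)})+L+w_i(M)$.

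Now rename endpoints so that $w_L^{(1)}=w_L^{(2)}=w_L$, $w_R^{(1)}=w_R^{(2)}=w_R$, and the two copies of the special vertex are identified as a single vertex $v_M$; then invoke Lemma~\ref{lem:ca-merge} to obtain the target instance $I$. For the forward direction, any YES-coloring $c$ of $I$ restricts to YES-colorings of $I_1'$ and $I_2'$, each yielding a matching $M_i$ with $c(v_M)=c(w_L)+L+w_i(M_i)$; since $c(v_M)$ and $c(w_L)$ are the same numbers in both readings, $w_1(M_1)=w_2(M_2)$. Conversely, given matchings of equal weight, Lemmas~\ref{lem:matchings-to-ca} and \ref{lem:ca-extend} produce YES-colorings $c_1',c_2'$ of $I_1',I_2'$ which after a common shift agree on $w_L$, on $w_R=w_L+S-1$, and on $v_M$, so Lemma~\ref{lem:ca-merge}(ii) combines them into a YES-coloring of $I$.

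For the bookkeeping, the vertex count is $|V(I_1')|+|V(I_2')|-3=O(n_1+n_2)$; since $L,s_i=O(n_i^2 m_i)$ one has $S=O(n_1^2 m_1+n_2^2 m_2)$, which bounds every edge weight as required. The only technical subtlety is ensuring that the three shared vertices automatically receive matching colors under condition (ii) of Lemma~\ref{lem:ca-merge}; the choice $\lambda_i=L-l_i$ handles $v_M$, while aligning the spans and the $(w_L,w_R)$-spanned property handle the endpoints.
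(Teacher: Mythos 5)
Your proposal follows essentially the same route as the paper: encode each $G_i$ via Lemma~\ref{lem:matchings-to-ca}, pad with Lemma~\ref{lem:ca-extend} using left extension $l_{\max}-l_i$ to a common span, and combine with Lemma~\ref{lem:ca-merge} so that the shared vertex $v_M$ forces $w_1(M_1)=w_2(M_2)$. The only (harmless) cosmetic difference is that you identify the padded endpoints $w_L^{(1)}=w_L^{(2)}$ and $w_R^{(1)}=w_R^{(2)}$ outright, whereas the paper keeps them distinct and lets the $(\set{w_L^{(1)},w_L^{(2)}},\set{w_R^{(1)},w_R^{(2)}})$-spanned property force their colors to coincide.
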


\begin{figure}[t]\begin{center}
  \def\svgwidth{5in}
  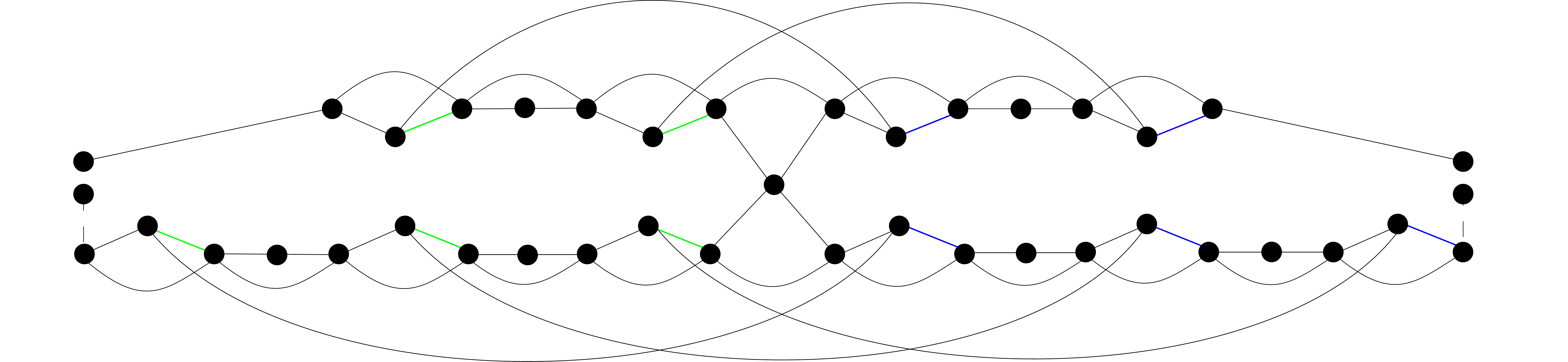
  \caption{
    Two weighted full bipartite graphs $(G_1, w_1)$ (with $n_1 = 2$)
    and $(G_2, w_2)$ (with $n_2 = 3$)
    encoded in a \CA form.
    The color of the vertex $v_M = w_2$ corresponds to the weight of some
    perfect matching in $G_1$
    and to the weight of some perfect matching in $G_2.$
    These two weights have to be equal.
    The picture is simplified.
    Some of the edges are omitted in the picture.
    Note that the values $M$ and $m$ can be different for $(G_1, w_1)$ and
    for $(G_2, w_2)$.
  }
  \label{fig:matchings-to-ca}
\end{center}
\vspace{-1cm}
\end{figure}

In the proof we use Lemma~\ref{lem:matchings-to-ca} to encode
$G_1$ and $G_2$ in two instances of \CA,
then we extend them to the common length using
Lemma~\ref{lem:ca-extend}
and finally we merge them using Lemma~\ref{lem:ca-merge}.
The proof is straightforward and is moved to the Appendix
due to space limitations.
The simplified picture of the obtained \CA instance can
be found in Figure~\ref{fig:matchings-to-ca}.

Now we can use the results of Section~\ref{sec:CMW} to get following
two corollaries.

\begin{corollary}
There is no algorithm solving \CA in
$2^{o(n\log n)} \poly{r}$
where $n$ is the number of the vertices
and $r$ is the bit size of the instance
unless ETH fails.
\end{corollary}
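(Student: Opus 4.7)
My plan is to mirror the structure of the proof of Corollary~\ref{cor:CMW-hardness}, composing Theorem~\ref{thm:ThreeSAT-to-CMW} with Lemma~\ref{lem:CMW-to-CA} to obtain a direct compressed reduction from \ThreeSAT to \CA, and then derive a contradiction with ETH by the same counting argument.

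First, starting from a \ThreeSAT instance with $n$ variables and $m$ clauses, I would apply Theorem~\ref{thm:ThreeSAT-to-CMW} to produce a \CMW instance whose two bipartite graphs have $O\of{\ndivlog{n}}$ and $O\of{\ndivlog{m}}$ vertices respectively and whose matching weights are bounded by $2^{3m}$. I would then feed this \CMW instance into Lemma~\ref{lem:CMW-to-CA}, obtaining a \CA instance with
\[
 N = O\of{\ndivlog{n} + \ndivlog{m}} = O\of{\frac{n+m}{\log\of{n+m}}}
\]
vertices and maximum edge weight $W = 2^{O(m)}$; consequently its total bit size is $r = O\of{N^2 \log W} = \poly{nm}$.

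Finally, I would assume a hypothetical $2^{o(n\log n)}\poly{r}$-time algorithm for \CA and run it on this instance. Using $\log N = \Theta\of{\log\of{n+m}}$, and hence $N \log N = O(n+m)$, the running time collapses to
\[
 2^{o\of{\frac{n+m}{\log\of{n+m}} \cdot \log\of{\frac{n+m}{\log\of{n+m}}}}} \poly{\poly{nm}}
 \;=\; 2^{o(n+m)},
\]
which contradicts Corollary~\ref{cor:ThreeSAT-hardness}.

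I do not foresee a genuine conceptual obstacle: once Lemma~\ref{lem:CMW-to-CA} is in place, the argument is essentially the bookkeeping already performed for \CMW. The only point requiring care is that Lemma~\ref{lem:CMW-to-CA} produces edge weights of magnitude $2^{O(m)}$, so $r$ is polynomial in $n$ and $m$ rather than in $N$ alone; one must check that the $\poly{r}$ factor is still absorbed by the $2^{o(n+m)}$ bound, which is immediate because $\poly{nm} = 2^{O(\log(nm))}$.
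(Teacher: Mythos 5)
Your proof is correct and is essentially the paper's argument: the paper's own proof simply reduces from \CMW via Lemma~\ref{lem:CMW-to-CA} (getting $n'=O(n)$ vertices and bit size $\poly{r}$) and invokes Corollary~\ref{cor:CMW-hardness} as a black box, whereas you inline that corollary's proof by composing Theorem~\ref{thm:ThreeSAT-to-CMW} with Lemma~\ref{lem:CMW-to-CA} and contradicting Corollary~\ref{cor:ThreeSAT-hardness} directly. The bookkeeping you do ($N\log N = O(n+m)$, $r = \poly{nm}$, the $\poly{r}$ factor absorbed into $2^{o(n+m)}$) is exactly the paper's, so there is no gap.
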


\begin{proof}
For a given instance of \CMW with $n$ vertices and the weights bounded by $m$
we can transform it by
Lemma~\ref{lem:CMW-to-CA} into an instance of \CA with $n' = O(n)$ vertices
and the weights bounded by $\ell = O(n^2m).$
Note that for the bit size $r'$ of the new instance we have
$\poly{r'} = \poly{(n')^2\ell} = \poly{n, m} = \poly{r}.$

Let us assume that we can solve \CA in $2^{o(n\log n)}\poly{r}$-time.
Then we can solve our instance in time
$2^{o(n'\log n')}\poly{r'} = 2^{o(n\log n)} \poly{r}$
which contradicts ETH by Corollary~\ref{cor:CMW-hardness}.
\end{proof}

\begin{corollary} \label{cor:ca-nloglogl} ($\bigstar$)
There is no algorithm solving \CA in
$2^{n \cdot o(\log\log\ell)} \poly{r}$
where $n$ is the number of the vertices
and $r$ is the bit size of the instance
unless ETH fails.
\end{corollary}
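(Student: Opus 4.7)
The plan is to chain the two reductions together (Theorem~\ref{thm:ThreeSAT-to-CMW} and Lemma~\ref{lem:CMW-to-CA}) and observe that the weight parameter $\ell$ of the resulting \CA instance is only singly exponential in the 3SAT size, so $\log\log\ell$ is logarithmic. That logarithmic factor is exactly what is needed to cancel the $1/\log$ compression produced by the reduction chain.

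Concretely, starting from a \ThreeSAT instance with $n$ variables and $m$ clauses, I first apply Theorem~\ref{thm:ThreeSAT-to-CMW} to obtain a \CMW instance on $O((n+m)/\log(n+m))$ vertices whose matching weights are bounded by $2^{3m}$. Feeding this into Lemma~\ref{lem:CMW-to-CA} yields a \CA instance with $n' = O((n+m)/\log(n+m))$ vertices and maximum edge weight $\ell = O((n')^2\cdot 2^{3m}) = 2^{O(n+m)}$, so $\log\log\ell = O(\log(n+m))$. The bit size $r'$ of this instance is polynomial in $n+m$.

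Now I would assume, for contradiction, that some algorithm solves \CA in time $2^{n\cdot o(\log\log\ell)}\poly{r}$. Applied to the instance just constructed, its running time is
\[
2^{n'\cdot o(\log\log\ell)}\poly{r'}
= 2^{O((n+m)/\log(n+m))\cdot o(\log(n+m))}\poly{n+m}
= 2^{o(n+m)},
\]
because $O((n+m)/\log(n+m))\cdot o(\log(n+m)) = o(n+m)$ and the polynomial factor is absorbed into $2^{o(n+m)}$. This contradicts Corollary~\ref{cor:ThreeSAT-hardness}.

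There is no real obstacle here; the proof is entirely a matter of plugging the parameters into the earlier reductions and checking the arithmetic on $\log\log\ell$. The only subtle point is interpreting ``$n\cdot o(\log\log\ell)$'' as a hypothetical running time bound $2^{n\cdot f(n,\ell)}\poly{r}$ with $f(n,\ell)/\log\log\ell\to 0$, and verifying that when $\ell = 2^{\Theta(n+m)}$ and $n=\Theta((n+m)/\log(n+m))$ this product is $o(n+m)$; this is routine.
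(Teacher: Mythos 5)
Your proposal is correct and follows essentially the same route as the paper's own proof: chain Theorem~\ref{thm:ThreeSAT-to-CMW} with Lemma~\ref{lem:CMW-to-CA}, observe $\ell = 2^{O(n+m)}$ so $\log\log\ell = O(\log(n+m))$, and let this cancel the $1/\log(n+m)$ compression in the vertex count to get a $2^{o(n+m)}$ algorithm for \ThreeSAT, contradicting Corollary~\ref{cor:ThreeSAT-hardness}. The parameter bookkeeping ($n'$, $\ell$, $r$) matches the paper's calculation, so there is nothing to add.
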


\subsection*{Acknowledgments}

We thank {\L}ukasz Kowalik who was the originator of this studies for
a comprehensive support.

\bibliographystyle{abbrv}
\bibliography{noexp}

\newpage
\appendix

\section{Omitted Proofs from Section~\ref{sec:CMW}}

\subsection{Omitted Proofs from Section~\ref{subsec:ThreeSAT-to-FInt}}

\begin{lemma}[Lemma~\ref{lem:ThreeSAT-to-FInt}]
There is a polynomial time reduction from a given instance of \ThreeSAT with
$n$ variables and $m$ clauses
into an instance of \FInt with $f:\feln{n}{2}$ and $g:\feln{m}{7}$
such that $\max{X_f} < 2^{3m}$ and  $\max{X_g} < 2^{3m}.$
\end{lemma}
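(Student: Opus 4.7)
The plan is to formalize the construction already sketched informally in Example~\ref{exmp:TwoSAT-to-FInt}, adapting it from \TwoSAT to \ThreeSAT. First I would fix an arbitrary ordering of the $3m$ literal occurrences in the formula and identify each assignment of truth values to these occurrences with a bit vector in $\set{0,1}^{3m}$, i.e., with an integer in $\set{0,1,\ldots,2^{3m}-1}$. The bound $\max X_f, \max X_g < 2^{3m}$ will then follow for free, provided every element of each family is realized as such a bit vector without carries.

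For $f:\feln{n}{2}$, I would let $P_i \sseq \el{3m}$ denote the set of positions at which variable $x_i$ occurs, set $f(i,1) = \sum_{p\in P_i} 2^{p-1}$, and $f(i,2) = 0$. Because the sets $P_1,\ldots,P_n$ partition $\el{3m}$, every sum $\sum_{i=1}^n f(i, \sigma(i))$ with $\sigma:\fel{n}{2}$ has pairwise disjoint summand supports, so integer addition coincides with bitwise OR. Hence $X_f$ is exactly the set of bit vectors encoding the consistent assignments of occurrences (those where $\sigma(i)=1$ turns on all positions in $P_i$ simultaneously and $\sigma(i)=2$ turns them all off).

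For $g:\feln{m}{7}$, I would enumerate, for each clause $C_j$, its $2^3 - 1 = 7$ satisfying truth assignments of the three literals it contains, in an arbitrary fixed order indexed by $k\in\el{7}$. Then $g(j,k)$ is defined to be the bit vector whose three bits corresponding to the occurrences inside $C_j$ are set according to the $k$-th satisfying pattern of $C_j$, and whose other bits are $0$. Since the three occurrences inside one clause are disjoint from the occurrences inside any other clause, the sum $\sum_{j=1}^m g(j,\tau(j))$ again exhibits no carries, and $X_g$ is exactly the set of bit vectors encoding those (possibly inconsistent) assignments of occurrences under which every clause is satisfied.

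By construction, a bit vector belongs to $X_f\cap X_g$ iff it encodes a consistent assignment that also satisfies every clause, i.e., iff the original \ThreeSAT instance is a YES-instance. The construction is clearly computable in polynomial time from explicit tabular descriptions of $f$ and $g$. The only subtle point, which I expect to be the main obstacle worth spelling out, is the no-carry argument: one must verify that the variable-indexed partition $\set{P_i}$ and the clause-indexed partition of occurrences both yield summand supports that are pairwise disjoint, so that $+$ on the integers agrees with $\vee$ on the bit vectors and no value exceeds $2^{3m}-1$.
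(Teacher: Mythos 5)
Your construction is correct and is essentially identical to the paper's own proof: the same characteristic-vector encoding with $f(i,1)$ the bit mask of the occurrences of the $i$-th variable, $f(i,2)=0$, and $g(j,k)$ the $k$-th of the seven satisfying patterns of the $j$-th clause, with disjoint supports guaranteeing no carries and the bound $2^{3m}$. Your explicit no-carry remark is a minor presentational addition the paper leaves implicit; no substantive difference.
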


\begin{proof}
Let $V = \set{\seq{v}{n}}$
and $C = \set{\seq{c}{m}}$ be
the sets of variables and clauses of the input formula, respectively.

Let $D = \set{\seq{d}{3m}}$ be the set of
all $3m$ occurrences of our $n$ variables in our $m$ clauses.
We will treat these occurrences as separate variables.

For every variable $v_i\in V$ we define a set $I_i\sseq\el{3m}$
such that $j\in I_i$ if and only if $d_j$ is an occurrence
of the variable $v_i.$

Similarly for every clause $c_i\in C$ we define a set $J_i\sseq\el{3m}$
such that $j\in J_i$ if and only if $d_j$ an occurrence (of any variable)
belonging to the clause $c_i.$
For every $i\in\el{m}$ we have $\abs{J_i} = 3.$

For every clause $c_i$ we can treat the subsets of $J_i$ as
the assignments of the occurrences $d_j$ belonging to the clause $c_i.$
We treat the subset $K\sseq J_i$ as the assignment of the occurrences in the
clause $c_i$ such that the occurrence $d_j$ is set to $1$ if and only if
$j \in K,$ otherwise it is set to $0.$
We say that $K\sseq J_i$ satisfies the clause $c_i$ if the corresponding
assignment of the occurrences satisfies this clause.

For every clause $c_i\in C$ let us define the set
$P_i = \set{K\sseq J_i : \mbox{$K$ satisfies the clause $c_i$}}.$
Again note that we treat here all the occurrences as the different variables.
Note that $\abs{P_i} = 7$ for every $i,$
so we can denote $P_i = \set{P_i^1, P_i^2, \ldots, P_i^7}.$

A number from $0, 1, \ldots, 2^{3m} - 1$ can be interpreted in the binary
system as the characteristic vector of length $3m$
of a subset of the indices of the occurrences
i.e., that the $i$-th bit represents if the
occurrence $d_i$ belongs to this subset or not.

We define a function $f:\feln{n}{2}$ such that for every $i\in\el{n}$
we set $f\of{i, 1} = \sum_{j \in I_i} 2^{j - 1}$
and $f\of{i, 2} = 0.$
In other words the number $f\of{i, 1}$ represents the characteristic vector of
all the occurrences of the variable $v_i.$

Note that $\{\sigma: [n]\to[2]\}$ corresponds to the set of all assignments
of variables.
Therefore $X_f$ is the set of all the characteristic vectors
which represent all the assignments of the occurrences such that all the
occurrences of the same variable have the same value.

We define a function $g:\feln{m}{7}$ such that for every $i\in\el{m}$
and for every $j\in\el{7}$ we can set
$g\of{i, j} = \sum_{k\in P_i^j} 2^{k - 1}.$
Then for every $i\in\el{m}$ the numbers
$g\of{i, 1}, g\of{i, 2}\ldots, g\of{i, 7}$
represent the characteristic vectors of all the assignments of the occurrences
in the clause $c_i$ which satisfy this clause.

Therefore the set $X_g$ is the set of all the characteristic vectors
which represents the assignments of all $3m$ occurrences such that all the
clauses are satisfied.

It follows that
the set $X_f \cap X_g$ is the set of all the characteristic vectors
which represent the assignments of the occurrences such that all the
occurrences of the same variable have the same value
and all the clauses are satisfied.
In other words, elements of $X_f \cap X_g$ correspond to satisfying
assignments.
\end{proof}

\subsection{Omitted Proofs from Section~\ref{subsec:FInt-to-CMW}}

The first lemma provides a way of merging $k$ permutations 
$\phi_1, \phi_2, \ldots, \phi_k: \hel{k}^b \to \el{k}^b$
into one permutation $\phi: \hel{k}^{b + 1} \to \el{k}^{b + 1}$
in a way specified by a function $\rho: \hel{k}^b \to \el{k}.$

\begin{lemma} \label{lem:bijection-merge}
For every $b\in \N$ and
for a given sequence of permutations
$\phi_1, \phi_2, \ldots, \phi_k: \hel{k}^b \to \el{k}^b$
and for every function $\rho: \hel{k}^b \to \el{k}$
there is a permutation $\phi: \hel{k}^{b + 1} \to \el{k}^{b + 1}$
such that 
\begin{enumerate}[(i)]
  \item for every $x\in\el{k}$ and for every $\h{w}\in\hel{k}^b$
    there exists $y\in\el{k}$ such that
    $\phi\of{\h{x}\h{w}} = y \phi_x\of{\h{w}}$ and moreover
  \item for every $\h{w}\in\hel{k}^b$ we have
    $\phi\of{\h{1}\h{w}} = \rho\of{\h{w}} \phi_1\of{\h{w}}.$
\end{enumerate}
\end{lemma}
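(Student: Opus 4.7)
The plan is to build $\phi$ fiber by fiber. For each target prefix ``tail'' $v\in\el{k}^b$, let
$F_v = \set{\h{x}\h{w} \in \hel{k}^{b+1} : \phi_x\of{\h{w}} = v}.$
Since each $\phi_x$ is a permutation of $\hel{k}^b$ onto $\el{k}^b$, for every $x\in\el{k}$ there is a unique $\h{w}_x := \phi_x^{-1}\of{v}$ such that $\phi_x\of{\h{w}_x} = v$, so $F_v$ consists of exactly the $k$ sources $\h{1}\h{w}_1, \h{2}\h{w}_2, \ldots, \h{k}\h{w}_k$. These are the only sources that can be sent to a target of the form $yv$ for some $y\in\el{k}$ under a function satisfying~(i), and the set of such targets $T_v = \set{1v, 2v, \ldots, kv}$ also has cardinality $k$. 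Thus on each fiber it is enough to construct a bijection $F_v \to T_v$.

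Condition~(ii) forces a single entry of this bijection: the source $\h{1}\h{w}_1$ must map to $\rho\of{\h{w}_1}\phi_1\of{\h{w}_1} = \rho\of{\h{w}_1}v$. After pinning down this one image, we still need to assign the remaining $k-1$ sources $\h{2}\h{w}_2, \ldots, \h{k}\h{w}_k$ bijectively to the remaining $k-1$ targets $T_v \setminus \set{\rho\of{\h{w}_1}v}$; this is always possible since we are extending an injection defined on one element of a $k$-element set to a bijection between two $k$-element sets. Any such extension works, for example the unique order-preserving bijection from $\set{2, 3, \ldots, k}$ to $\el{k} \setminus \set{\rho\of{\h{w}_1}}$.

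Doing this independently for each $v\in\el{k}^b$ defines $\phi:\hel{k}^{b+1}\to\el{k}^{b+1}$. The sets $\set{F_v}_{v\in\el{k}^b}$ partition $\hel{k}^{b+1}$ (every source $\h{x}\h{w}$ lies in $F_{\phi_x\of{\h{w}}}$ and in no other fiber because $\phi_x\of{\h{w}}$ is determined by the source), and $\set{T_v}_{v\in\el{k}^b}$ partition $\el{k}^{b+1}$; since $\phi$ is a bijection on each fiber, $\phi$ is a bijection overall. Property~(i) holds by construction since each $\h{x}\h{w}\in F_v$ is sent to some element of $T_v$, i.e.\ of the form $y\phi_x\of{\h{w}}$. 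Property~(ii) holds by the explicit choice of the image of the $x=\h{1}$ source within each fiber.

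The construction is really just a matching argument on each fiber, so there is no serious obstacle; the only thing to be careful about is verifying that the sources in $F_v$ are genuinely $k$ distinct words (ensured by the fact that their first letters $\h{1}, \h{2}, \ldots, \h{k}$ are all distinct, regardless of whether any $\h{w}_x$ coincide), so that a bijection $F_v \to T_v$ exists and condition~(ii) pins down only one of its values.
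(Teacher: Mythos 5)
Your proof is correct. It takes a somewhat different route than the paper's: the paper starts from the explicit permutation $\h{x}\h{w} \mapsto x\phi_x\of{\h{w}}$ and then repairs condition~(ii) by swapping the images of $\h{1}\h{w}$ and $\h{x}\h{u}$ (with $x = \rho\of{\h{w}}$ and $\h{u} = \phi_x^{-1}\of{\phi_1\of{\h{w}}}$), and the bulk of its argument is the verification that all these swapped pairs are pairwise disjoint so the swaps can be performed simultaneously, ending with a closed formula for $\phi$. You instead decompose the domain into the fibers $F_v$ and the codomain into the blocks $T_v$, observe that condition~(i) is precisely the requirement that $\phi$ map $F_v$ into $T_v$, and that condition~(ii) prescribes only a single image inside each $k$-element fiber, so any completion to a fiberwise bijection works. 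This is arguably cleaner: the paper's swaps are exactly within-fiber swaps, so your framing makes the disjointness issue vanish (choices in different fibers cannot interact) and also makes clear that every permutation satisfying~(i) arises this way; the paper's approach in exchange yields a fully explicit description of one particular $\phi$, which you match anyway by fixing the order-preserving completion on each fiber.
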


Before we proceed to the proof we suggest the reader to take a look at an
example in Figure~\ref{fig:matching-merge} ($b=1,$ $k=3$).

\begin{figure}[t]
\begin{center}
  \def\svgwidth{5in}
  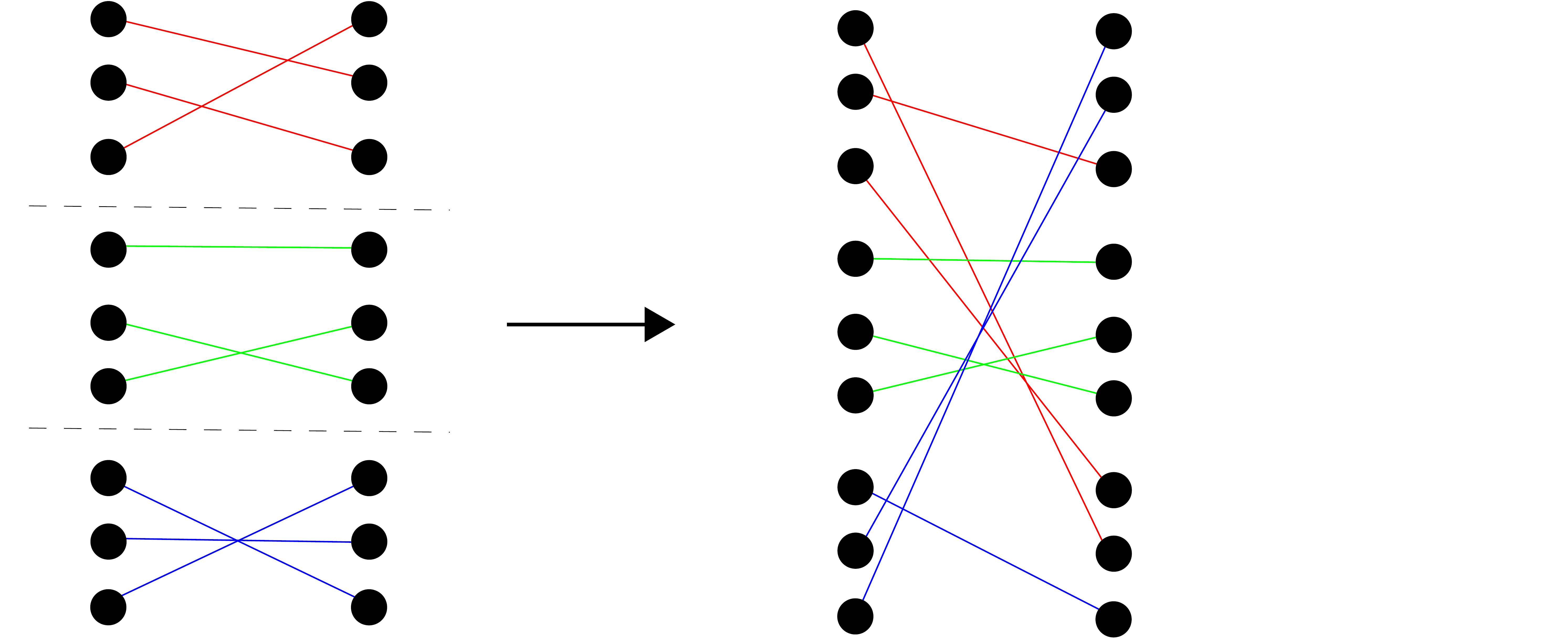
  \caption{
    Merging three permutations (presented as perfect matchings)
    with respect to the function $\rho$ such that
    $\rho\of{\langle\h{1}\rangle} = 3,$ $\rho\of{\langle\h{2}\rangle} = 1$
    and $\rho\of{\langle\h{3}\rangle} = 3.$
  }
  \label{fig:matching-merge}
\end{center}
\vspace{-0.5cm}
\end{figure}

\begin{proof}
We start with a permutation
$\h{x}\h{w} \mapsto x \phi_x\of{\h{w}}$
which already satisfies the condition
$\exists_y \phi\of{\h{x}\h{w}} = y \phi_x\of{\h{w}}.$
Then we are going to swap the values for some (disjoint) pairs of the arguments
in order to fulfill the condition
$\phi\of{\h{1}\h{w}} = \rho\of{\h{w}} \phi_1\of{\h{w}}.$
Such swaps are preserving the condition of being permutation.
Moreover we perform only such swaps that preserve also the
$\exists_y \phi\of{\h{x}\h{w}} = y \phi_x\of{\h{w}}$ condition.

For every $\h{w}\in\hel{k}^b$ we need to put
$\phi\of{\h{1}\h{w}} = \rho\of{\h{w}} \phi_1\of{\h{w}}.$
Let us assign $x = \rho{\of{\h{w}}}$ and
$\h{u} = \phi_x^{-1}\of{\phi_1\of{\h{w}}}.$
Note that $\phi_x\of{\h{u}} = \phi_1\of{\h{w}}.$
If $x \neq 1$ then to avoid a collision
$\phi\of{\h{1}\h{w}} = \phi\of{\h{x}\h{u}}$
we can put $\phi\of{\h{x}\h{u}} = 1\phi_1\of{\h{w}}.$
So we have swapped the values for the arguments $\h{1}\h{w}$ and $\h{x}\h{u}.$
Our function is still a permutation.
Note that the condition $\exists_y \phi\of{\h{x}{u}} = y\phi_x{\h{u}}$
is still preserved because $\phi_x\of{\h{u}} = \phi_1\of{\h{w}}.$
We just need to show that the swaps can be performed independently.

For every $i\in\el{k}$ a function $\phi_i^{-1} \circ \phi_1$ is a permutation
so for every $\h{w}\in\hel{k}^b$ the values of
$\rho\of{\h{w}} \phi_{\rho\of{\h{w}}}^{-1}\of{\phi_1\of{\h{w}}}$
are pairwise different.
Indeed for two different $\h{u}, \h{w} \in\hel{k}^b$
either the values $\rho\of{\h{u}}$ and $\rho\of{\h{w}}$ are different
or $\rho\of{\h{u}} = \rho\of{\h{w}} = x$ for some $x\in\el{k}$ and then
$\p{\phi_x^{-1} \circ \phi_1}\of{\h{u}} \neq
  \p{\phi_x^{-1} \circ \phi_1}\of{\h{w}}$
so then the values
$\phi_{\rho\of{\h{u}}}^{-1}\of{\phi_1\of{\h{u}}}$ and
$\phi_{\rho\of{\h{w}}}^{-1}\of{\phi_1\of{\h{w}}}$ are different.
Therefore our pairs of the arguments to swap are pairwise disjoint.
Thus all the swaps can be performed independently.

So for every $x\in\el{k}$ and $\h{w}\in\hel{k}^b$ we have
$$\phi\of{\h{x}\h{w}} = \begin{cases}
  \rho\of{\h{w}} \phi_1\of{\h{w}} & \mbox{for } \h{x} = \h{1} \\
  1 \phi_x\of{\h{w}} & \mbox{for }
    \h{x} \neq \h{1} \wedge \rho\of{\phi_1^{-1}\of{\phi_x\of{\h{w}}}} = x\\
  x \phi_x\of{\h{w}} & \mbox{in other cases.}\\
\end{cases}$$
\end{proof}

\begin{lemma}[Lemma~\ref{lem:bijection}]
Let $b\in \N$ and $\alpha:\hel{k}^b \times \el{b} \to \el{k} \cup \set{\bot}$
such that for every $\h{w}\in\hel{k}^b$ and for every $i\in\el{b}$
holds $\alpha\of{\h{w}, i} \neq \bot$ if and only if $\h{w}_i = \h{1}.$ 
There is a permutation $\phi:\hel{k}^b \to \el{k}^b$
such that for every $\h{w}\in\hel{k}^b$ and for every $i\in\el{b}$
if $\h{w}_i = \h{1}$ then $\phi\of{\h{w}}_i = \alpha\of{\h{w}, i}.$
\end{lemma}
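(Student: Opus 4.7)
My plan is to prove Lemma~\ref{lem:bijection} by induction on $b$, using Lemma~\ref{lem:bijection-merge} as the inductive glue. The base case is $b = 0$, where $\hel{k}^0$ is a one-element set and $\el{0} = \emptyset$, so the constraint imposed by $\alpha$ is vacuous and the unique map $\epsilon \mapsto \epsilon$ trivially serves as $\phi$.

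For the inductive step, I would assume the lemma for length-$b$ words and prove it for a given $\alpha:\hel{k}^{b+1}\times\el{b+1}\to\el{k}\cup\set{\bot}$. For each $x\in\el{k}$ I would peel off the first character and define a restricted specification $\alpha_x:\hel{k}^{b}\times\el{b}\to\el{k}\cup\set{\bot}$ by $\alpha_x\of{\h{w}, i} = \alpha\of{\h{x}\h{w}, i+1}$. The hypothesis on $\alpha$ transfers at once to $\alpha_x$: since $\of{\h{x}\h{w}}_{i+1} = \h{w}_i$, we have $\alpha_x\of{\h{w}, i}\neq\bot$ iff $\h{w}_i=\h{1}$. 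Applying the inductive hypothesis to each $\alpha_x$ yields permutations $\phi_1,\phi_2,\ldots,\phi_k:\hel{k}^b\to\el{k}^b$ with $\phi_x\of{\h{w}}_i = \alpha_x\of{\h{w}, i}$ whenever $\h{w}_i=\h{1}$.

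Next I would define $\rho:\hel{k}^b\to\el{k}$ by $\rho\of{\h{w}} = \alpha\of{\h{1}\h{w}, 1}$; this is well defined because the first character of $\h{1}\h{w}$ is $\h{1}$, so the value is in $\el{k}$. Feeding $\phi_1,\ldots,\phi_k$ and this $\rho$ into Lemma~\ref{lem:bijection-merge} produces a permutation $\phi:\hel{k}^{b+1}\to\el{k}^{b+1}$. Verifying the required condition then splits cleanly in two: for position $i=1$ of a word $\h{1}\h{w}$ starting with $\h{1}$, property (ii) of the merge gives $\phi\of{\h{1}\h{w}}_1 = \rho\of{\h{w}} = \alpha\of{\h{1}\h{w}, 1}$; for any position $i\geq 2$ of a word $\h{x}\h{w}$ with $\of{\h{x}\h{w}}_i = \h{1}$ (equivalently $\h{w}_{i-1}=\h{1}$), property (i) gives $\phi\of{\h{x}\h{w}}_i = \phi_x\of{\h{w}}_{i-1}$, and by induction this equals $\alpha_x\of{\h{w}, i-1} = \alpha\of{\h{x}\h{w}, i}$, as required.

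The induction itself is straightforward; the only real obstacle was already dealt with inside Lemma~\ref{lem:bijection-merge}, namely that the swaps used to force $\phi\of{\h{1}\h{w}} = \rho\of{\h{w}}\phi_1\of{\h{w}}$ on the layer indexed by $\h{1}$ must be pairwise disjoint (and must preserve the ``tail equals $\phi_x\of{\h{w}}$'' shape). Here the only remaining subtlety is the bookkeeping of the index shift between positions in $\hel{k}^{b+1}$ and positions in $\hel{k}^b$, which is purely mechanical.
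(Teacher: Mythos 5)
Your proof is correct and follows essentially the same route as the paper: induction on $b$, peeling off the first character to get $\alpha_x$ and the permutations $\phi_x$ from the inductive hypothesis, then merging them via Lemma~\ref{lem:bijection-merge} with $\rho\of{\h{w}} = \alpha\of{\h{1}\h{w}, 1}$ and verifying position $1$ by property (ii) and positions $i\geq 2$ by property (i). The index-shift bookkeeping you describe matches the paper's argument exactly.
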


\begin{proof}
We will use an induction on $b.$

For $b = 0$ we have $\phi\of{\varepsilon} = \varepsilon.$

For $b > 0$ we can define functions
$\alpha_1, \alpha_2, \ldots, \alpha_k:
  \hel{k}^{b - 1} \times \el{b - 1} \to \el{k} \cup \set{\bot}$
such that for every $x\in\el{k}$ every $\h{w}\in\hel{k}^{b - 1}$
and every $i\in\el{b - 1}$
we put $\alpha_x\of{\h{w}, i} = \alpha\of{\h{x}\h{w}, i + 1}.$

From the inductive hypothesis for $b - 1$ used for every function of
$\alpha_1, \alpha_2, \ldots, \alpha_k$
we got the permutations
$\phi_1, \phi_2, \ldots, \phi_k: \hel{k}^{b - 1} \to \el{k}^{b - 1}$
such that for every $x\in\el{k}$ for every $\h{w}\in\hel{k}^{b - 1}$
and for every $i\in\el{b - 1}$
we have that if $\h{w}_i = \h{1}$ then
$\phi_x\of{\h{w}}_i = \alpha_x\of{\h{w}, i}
  = \alpha\of{\h{x}\h{w}, i + 1}.$

Now we can use Lemma~\ref{lem:bijection-merge} to merge the permutations
$\phi_1, \phi_2, \ldots, \phi_k$
using a function $\rho:\hel{k}^{b - 1} \to \el{k}$ such that
$\rho\of{\h{w}} = \alpha\of{\h{1}\h{w}, 1}$
for every $\h{w}\in\hel{k}^{b - 1}.$
We obtain one permutation
$\phi:\hel{x}^b \to \el{x}^b$
such that by Lemma~\ref{lem:bijection-merge} (i)
for every $\h{x}\in\hel{k},$ for every $\h{w}\in\hel{k}^{b - 1}$
and for every $i\in\el{b - 1}$ we have that
$\phi\of{\h{x}\h{w}}_{i + 1} = \phi_x\of{\h{w}}_i$
so if $\h{w}_i = \h{1}$ then
$\phi\of{\h{x}\h{w}}_{i + 1} = \phi_x\of{\h{w}}_i
  = \alpha\of{\h{x}\h{w}, i + 1}.$
Also by Lemma~\ref{lem:bijection-merge} (ii),
for every $\h{w}\in\hel{k}^{b - 1}$ we have that
$\phi\of{\h{1}\h{w}}_1 = \rho\of{\h{w}} = \alpha\of{\h{1}\h{w}, 1}.$
So for every $\h{w}\in\hel{k}^b$ and for every $i\in\el{b}$
we have that if $\h{w}_i = \h{1}$ then
$\phi\of{\h{w}}_i = \alpha\of{\h{w}, i},$
as required.
\end{proof}

\begin{figure}[t]
\begin{center}
  \def\svgwidth{5in}
  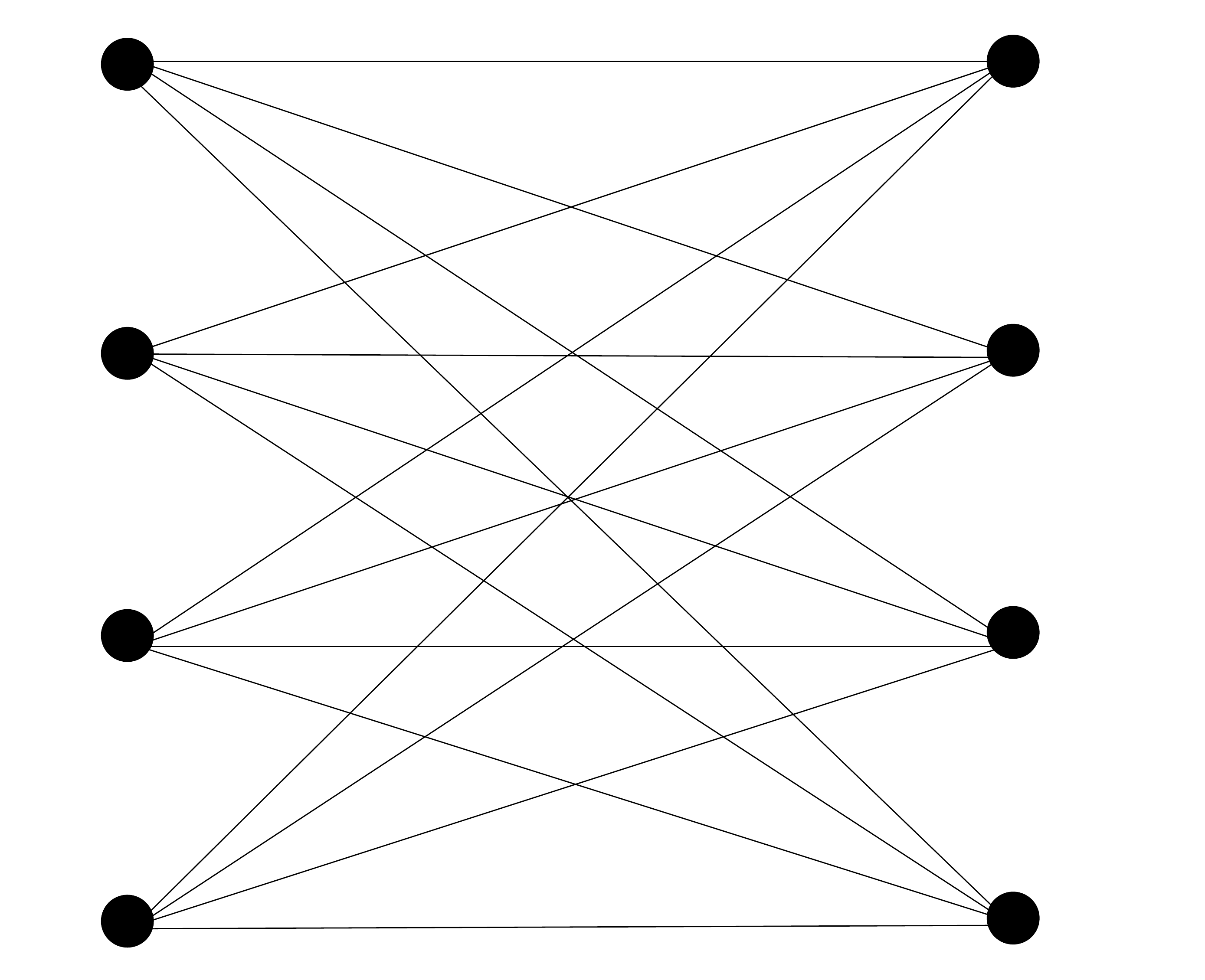
  \caption{
    The graph encoding an $f$-family for $f:\feln{4}{2}.$
    The lower indices $(1), (2), (3)$ and $(4)$
    are added to indicate the correspondence between
    the occurrences of $\h{1}$ and the elements of $\el{n}$
    (the first argument of the function $f$).
    The lower indices $(i)$ and $(ii)$ are added to indicate the correspondence
    between the second argument of the function $f$ and the position in the
    (two element) sequence $\langle \cdot, \cdot \rangle.$
  }
  \label{fig:k44}
\end{center}
\vspace{-1cm}
\end{figure}

\section{Omitted Proofs from Section~\ref{sec:CA}}

\begin{lemma}[Lemma~\ref{lem:ca-merge}]
For every $(u, v)$-spanned instance $I_1=(V_1, d_1, s)$ and
$(w, z)$-spanned instance $I_2=(V_2, d_2, s)$
of \CA
there is a $(\set{u,w}, \set{v,z})$-spanned instance $I=(V_1 \cup V_2, d, s)$
of \CA
such that
\begin{enumerate}[(i)]
  \item for every YES-coloring $c$ of $I$
    the coloring $c\mid_{V_1}$ is a YES-coloring of $I_1$
    and the coloring $c\mid_{V_2}$ is a YES-coloring of $I_2,$
  \item for every YES-coloring $c_1$ of $I_1$
    and every YES-coloring $c_2$ of $I_2$
    such that $c_1\of{u} = c_2\of{w},$ $c_1\of{v} = c_2\of{z}$ and
    for every $x\in V_1 \cap V_2$ we have $c_1\of{x} = c_2\of{x}$
    there exists a YES-coloring $c$ of $I$
    such that $c\mid_{V_1} = c_1$ and $c\mid_{V_2} = c_2.$
\end{enumerate}
\end{lemma}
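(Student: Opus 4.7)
The plan is to build $I$ by putting both $I_1$ and $I_2$ onto the shared vertex set $V_1\cup V_2$ and then adding just two extra distance constraints of value $s-1,$ one between $u$ and $z$ and one between $v$ and $w.$ Formally, for every pair $x,y\in V_1\cup V_2$ I would set
\[
d\of{x,y} = \max\of{d_1'\of{x,y},\ d_2'\of{x,y},\ e\of{x,y}},
\]
where $d_i'\of{x,y} = d_i\of{x,y}$ whenever $x,y\in V_i$ and $0$ otherwise, and $e\of{x,y} = s-1$ if $\set{x,y}\in\set{\set{u,z},\set{v,w}}$ and $0$ otherwise. The role of the two extra constraints is to glue $u$ to $w$ and $v$ to $z$ in every YES-coloring.

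For part (i), since $d\ge d_i$ on $V_i^2,$ any YES-coloring $c$ of $I$ restricts on $V_i$ to an assignment satisfying every distance constraint of $I_i,$ and its span is at most $s,$ so $c\mid_{V_i}$ is a YES-coloring of $I_i.$

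For part (ii), given the compatible YES-colorings $c_1$ and $c_2,$ I would define $c$ by $c\of{x}=c_1\of{x}$ for $x\in V_1$ and $c\of{x}=c_2\of{x}$ for $x\in V_2$; this is well-defined because $c_1=c_2$ on $V_1\cap V_2.$ The within-$V_i$ constraints are inherited from $c_i,$ and the two extra constraints unravel to
\[
\abs{c\of{u}-c\of{z}} = \abs{c_2\of{w}-c_2\of{z}} = s-1, \qquad \abs{c\of{v}-c\of{w}} = \abs{c_1\of{v}-c_1\of{u}} = s-1,
\]
using $c_1\of{u}=c_2\of{w},$ $c_1\of{v}=c_2\of{z}$ and the two spanned hypotheses. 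For the span bound, being $(u,v)$-spanned together with $\myspan\of{c_1}\le s$ forces the image of $c_1$ into the closed interval $\bk{\min\of{c_1\of{u},c_1\of{v}},\ \max\of{c_1\of{u},c_1\of{v}}}$ of length $s-1,$ and the identifications together with the $(w,z)$-spanned hypothesis place the image of $c_2$ in exactly the same interval, so $c$ has span at most $s.$

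Finally, to verify that $I$ itself is $(\set{u,w},\set{v,z})$-spanned, I would combine part (i), which gives $\abs{c\of{u}-c\of{v}}=\abs{c\of{w}-c\of{z}}=s-1,$ with the two added constraints, which give $\abs{c\of{u}-c\of{z}}=\abs{c\of{v}-c\of{w}}=s-1,$ covering all four pairs in $\set{u,w}\times\set{v,z}.$ The only mildly delicate step is the span bound in part (ii); once one observes that the spanned properties pin $\set{c_1\of{u},c_1\of{v}}$ and $\set{c_2\of{w},c_2\of{z}}$ to the extremal colors of $c_1$ and $c_2,$ the identification hypothesis makes those two extremal pairs coincide, and everything else is a routine unpacking of the definitions.
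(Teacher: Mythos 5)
Your proposal is correct and follows essentially the same route as the paper: overlay the two instances on $V_1\cup V_2$, take (a maximum of) the inherited distances, and add $(s-1)$-constraints to glue the endpoints, then verify (i) by monotonicity of $d$ and (ii) by observing that the spanned hypotheses pin $\set{c_1\of{u},c_1\of{v}}$ and $\set{c_2\of{w},c_2\of{z}}$ to the extreme colors, which coincide by the identification. The only cosmetic difference is that the paper puts the $(s-1)$-constraint on all four pairs of $\set{u,w}\times\set{v,z}$ while you put it only on $\set{u,z}$ and $\set{v,w}$ and recover the remaining two equalities from part (i) and the spanned hypotheses; both variants work.
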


\begin{proof}
Let $B = \set{u, w} \times \set{v, z} \cup \set{v, z} \times \set{u, w}$
and let
\[
  d(x, y) =
  \begin{cases}
    s - 1 & \mbox{if } (x, y)\in B\\
    \max\{d_1(x, y), d_2(x, y)\} & \mbox{if } x, y\in V_1 \cap V_2\\
    d_1(x, y) & \mbox{if } x, y \in V_1 \mbox{ and } x,y\not\in V_1\cap V_2\\
    d_2(x, y) & \mbox{if } x, y \in V_2 \mbox{ and } x,y\not\in V_1\cap V_2\\
    0 & \mbox{otherwise}.
  \end{cases}
\]

Our instance is $(\set{u, w}, \set{v, z})$-spanned because for all the pairs
in $B$ we set the minimum allowed distance to at least $s - 1.$

Note that for $i=1,2,$ for every $x,y\in V_i$ we have $d(x, y) \geq d_i(x, y).$
Hence every
proper coloring $c$ of $I$ has the property that $c\mid_{V_1}$
is a proper coloring of $I_1$ and $c\mid_{V_2}$ is a proper coloring of $I_2.$
Also the maximum allowed spans of $I, I_1, I_2$ are the same,
so for every YES-coloring $c$ of $I$ coloring $c\mid_{V_1}$ is a YES-coloring
of $I_1$ and $c\mid_{V_2}$ is a YES-coloring of $I_2.$
Hence (i) is clear.

For (ii), consider
a YES-coloring $c_1$ of $I_1$ and a YES-coloring $c_2$ of $I_2$
such that $c_1\of{u} = c_2\of{w}$ and $c_1\of{v} = c_2\of{z}$ and
such that for every $x\in V_1 \cap V_2$ we have $c_1\of{x} = c_2\of{x}.$
Then we define a coloring $c$
\[
  c\of{x} = \begin{cases}
    c_1\of{x} & \mbox{if } x\in V_1\\
    c_2\of{x} & \mbox{if } x\in V_2.\\
  \end{cases}
\]

We know that $c_1\of{u} = c_2\of{w}$ and $c_1\of{v} = c_2\of{z}$
so all the vertices of $V_1 \cup V_2$ have colors
between $c_1\of{u}$ and $c_1\of{v},$
i.e., the $\myspan$ of $c$ is at most $s$ as required.
It is straightforward to check that $c$ is a proper coloring.
\end{proof}

\begin{lemma}[Lemma~\ref{lem:ca-extend}]
For every $(v_L, v_R)$-spanned instance $I=(V, d, s)$ of \CA and
for every numbers $l, r \in \N$ there exists
a $(w_L, w_R)$-spanned instance $I'=(V \cup \set{w_L, w_R}, d', l + s + r)$
such that
\begin{enumerate}[(i)]
  \item for every YES-coloring $c$ of $I$
    there is a YES-coloring $c'$ of $I'$ such that $c'\mid_V = c,$
  \item for every YES-coloring $c'$ of $I'$ such that
    $c'\of{w_L} \leq c'\of{w_R}$ we have that 
    \begin{itemize}
      \item a coloring $c'\mid_V$ is a YES-coloring of $I,$
      \item $c'\of{v_L} = c'\of{w_L} + l$ and
        $c'\of{v_R} = c'\of{w_R} - r.$
    \end{itemize}
\end{enumerate}
\end{lemma}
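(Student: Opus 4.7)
The plan is to adjoin two new vertices $w_L, w_R$ to $V$ and choose $d'$ so that in every YES-coloring of the enlarged instance $I'$ the colors $c'(w_L)$ and $c'(w_R)$ occupy the two extreme positions of the span $l+s+r$, while $v_L$ and $v_R$ are pinned at distances exactly $l$ from $c'(w_L)$ and $r$ from $c'(w_R)$, respectively. Concretely, I keep $d'$ equal to $d$ on $V \times V$, put $d'(w_L, w_R) = l+s+r-1$, put $d'(w_L, x) = l$ and $d'(x, w_R) = r$ for every $x \in V$, and additionally strengthen $d'(w_L, v_R) = l+s-1$ and $d'(v_L, w_R) = s+r-1$. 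The first assignment together with the span bound $l+s+r$ immediately makes $I'$ be $(w_L, w_R)$-spanned, since it forces $|c'(w_L) - c'(w_R)|$ to be at once at least and at most $l+s+r-1$.

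For (i), given a YES-coloring $c$ of $I$, by reflecting all colors if necessary I may assume $c(v_L) \le c(v_R)$. Since $I$ has span $s$ and is $(v_L, v_R)$-spanned, we have $c(v_R) - c(v_L) = s-1$, and every $c(x)$ for $x \in V$ lies in $[c(v_L), c(v_R)]$. I extend $c$ by putting $c'(w_L) = c(v_L) - l$ and $c'(w_R) = c(v_R) + r$. The overall span becomes exactly $l+s+r$, and verifying the new distance constraints is a routine computation: each $|c'(w_L) - c'(x)|$ equals $c(x) - c(v_L) + l$, so it is at least $l$, equal to $l$ at $x = v_L$ and equal to $l+s-1$ at $x = v_R$; analogous computations handle $w_R$.

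For (ii), take a YES-coloring $c'$ of $I'$ with $c'(w_L) \le c'(w_R)$. Since $I'$ is $(w_L, w_R)$-spanned, $c'(w_R) - c'(w_L) = l+s+r-1$, and the span bound forces every vertex to be colored within $[c'(w_L), c'(w_R)]$. The constraints $d'(w_L, x) = l$ and $d'(x, w_R) = r$ then push each $c'(x)$ into the sub-interval $[c'(w_L)+l,\, c'(w_R)-r]$, which contains exactly $s$ integers. The strengthened constraints $d'(w_L, v_R) = l+s-1$ and $d'(v_L, w_R) = s+r-1$ pin $c'(v_R) = c'(w_R)-r$ and $c'(v_L) = c'(w_L)+l$ exactly, which is the required position. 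Since $d'\mid_{V \times V} = d$ and the span of $c'\mid_V$ is at most $s$, the restriction $c'\mid_V$ is a YES-coloring of $I$.

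The one place that requires attention is the disambiguation inside the sub-interval $[c'(w_L)+l,\, c'(w_R)-r]$: without the strengthened distances, the condition $(v_L, v_R)$-spanned would only force $\{c'(v_L), c'(v_R)\} = \{c'(w_L)+l, c'(w_R)-r\}$, leaving both orientations allowed. The strengthened inequalities break this symmetry, and they are consistent with the coloring constructed in (i), as the computation there shows.
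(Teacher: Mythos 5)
Your construction is the same two-sentinel gadget as the paper's: keep $d$ on $V\times V$, set $d'(w_L,w_R)=l+s+r-1$, $d'(w_L,x)=l$ and $d'(x,w_R)=r$ for all $x\in V$, and allow span $l+s+r$; the paper stops there and calls (i) and (ii) straightforward. The genuine difference is your two extra constraints $d'(w_L,v_R)=l+s-1$ and $d'(v_L,w_R)=s+r-1$, and this is not cosmetic. As you note, the bare gadget only forces $\{c'(v_L),c'(v_R)\}=\{c'(w_L)+l,\;c'(w_R)-r\}$, because reflecting the inner coloring inside the window $[c'(w_L)+l,\,c'(w_R)-r]$ again yields a proper coloring of span at most $l+s+r$. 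Concretely, for $V=\{v_L,v_R\}$ with $d(v_L,v_R)=s-1$, $l=0$, $r=1$, the coloring $c'(w_L)=0$, $c'(w_R)=s$, $c'(v_R)=0$, $c'(v_L)=s-1$ is a YES-coloring of the paper's $I'$ with $c'(w_L)\le c'(w_R)$ but $c'(v_L)\ne c'(w_L)+l$, so the paper's $d'$ does not literally satisfy (ii)(b); your strengthened distances pin $c'(v_L)=c'(w_L)+l$ and $c'(v_R)=c'(w_R)-r$ exactly, and your check that they remain tight under the extension built in (i) is correct. Since the later reduction from \CMW uses property (ii) as stated, your version is the one that actually supports that use. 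One nitpick: in (i) you cannot simply ``assume $c(v_L)\le c(v_R)$'' by reflecting $c$, since the extension must restrict to the given $c$; either place $w_L$ above and $w_R$ below in the other orientation, or reflect, extend, and reflect back with the same reflection map --- both are immediate, so this is a phrasing issue, not a gap.
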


\begin{proof}
We assume that $w_L, w_R\not\in V.$ We put
\[
  d'\of{x, y} =
  \begin{cases}
    l + s - 1 + r & \mbox{for } \set{x, y} = \set{w_L, w_R}\\
    l & \mbox{for } \set{x, y} \cap \set{w_L, w_R} = \set{w_L}\\
    r & \mbox{for } \set{x, y} \cap \set{w_L, w_R} = \set{w_R}\\
    d\of{x, y} &\mbox{for } x,y \in V.\\
  \end{cases}
\]
It is straightforward to check that $d'$ satisfies (i) and (ii).
\end{proof}

\begin{lemma}[Lemma~\ref{lem:matchings-to-ca}, proof of {\bf Claim 6}]
Let $G = (V_1 \cup V_2, E, w)$ be a weighted full bipartite graph
with nonnegative weights and such that $\pow{V_1} = \pow{V_2}.$
Let $n = \pow{V_1},$
$m = \max_{e\in E} w(e),$
$M = n \cdot m + 1,$
$l = \p{4n - 1} \cdot M$ and
$s = (8n - 1) \cdot M.$
There exists a $(v_L, v_R)$-spanned instance $I=(V, d, s)$ of \CA
with $\pow{V} = O\of{n}$ and
such that for some vertex $v_M \in V,$
\begin{enumerate}[(i)]
  \item for every YES-coloring $c$ of $I$
    such that $c\of{v_L} \leq c\of{v_R}$
    there exists a perfect matching $M_G$ in $G$ such that
    $c\of{v_M} = c\of{v_L} + l + w\of{M_G}$ and
  \item for every perfect matching $M_G$ in $G$ there exists a YES-coloring
    $c$ of $I$ such that $c\of{v_L} \leq \of{v_R}$ and
    $c\of{v_M} = c\of{v_L} + l + w\of{M_G}.$
\end{enumerate}
\end{lemma}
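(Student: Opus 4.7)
The plan is to exhibit the unique YES-coloring $c$ satisfying $\pi_c = \pi$ and then read off $c(v_M) - c(v_L)$ by direct summation. Normalising $c(v_L) = c(v_1) = 0$, I walk along the sequence of Claim~5 and assign each consecutive colour gap the smallest value compatible with the minimum-distance constraints to the adjacent skeleton vertices. Concretely, the gap between $v_{2i-1}$ and $a_{\pi(i)}$ is $M$, the gap between $a_{\pi(i)}$ and $v_{2i}$ is $M + w(v^{(1)}_{\pi(i)}, v^{(2)}_{i})$, the two gaps bracketing every spacer $w_j$ are both $M$, and on the $b$-side the gap between $b_{\pi(i)}$ and $v_{2n+2i}$ is the complementary value $M + m - w(v^{(1)}_{\pi(i)}, v^{(2)}_{i})$. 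Every vertex colour is thereby pinned down.

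The values of $\myspan(c)$ and of $c(v_M) - c(v_L)$ then follow by a telescoping sum. The $n$ $a$-blocks contribute $\sum_i (2M + w(v^{(1)}_{\pi(i)}, v^{(2)}_{i}))$, the $n$ $b$-blocks contribute $\sum_i (2M + m - w(v^{(1)}_{\pi(i)}, v^{(2)}_{i}))$, and the $2n-1$ spacer blocks each contribute $2M$. The weight terms cancel and the three totals add up to $(8n-2)M + nm$, which equals $s - 1$ via the identity $nm = M - 1$. Truncating the same sum at $v_M = w_n$ keeps the $n$ $a$-blocks, the first $n-1$ full spacer blocks, and a single extra $M$ for the step $v_{2n} \to w_n$, giving $(4n-1)M + w(M_\pi) = l + w(M_\pi)$, exactly as required.

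Properness of $c$ is then checked by inspecting the blocks. The long-range constraint $d(v_i, v_j) = 2M|i-j|$ matches the sum of the intervening block widths, each of which is at least $2M$. The constraints $d(a_i, v_j)$, $d(b_i, v_j)$, $d(w_i, v_j)$ were crafted so as to hold at equality when each selector or spacer lies in its designated block and to be dominated by a full $M$ otherwise; the remaining distances $d(a_i, a_j) = d(b_i, b_j) = 4M$ and $d(a_i, w_j) = d(b_i, w_j) = 2M$ follow because at least one full skeleton block separates any two non-adjacent selectors. The coupling distance $d(a_i, b_i) = 4nM$ is the only identity that is simultaneously tight on both sides: because $\pi = \rho$, the vertices $a_i$ and $b_i$ are separated by exactly $2n$ skeleton slots, and the $w_j$ and $m - w_j$ contributions telescope so that their colour difference equals $4nM$ on the nose. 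Uniqueness of $c$ follows because $\myspan(c) = s$ is saturated, so every inequality invoked in Claims~1--5 must hold at equality, which forces every consecutive gap to its minimum and determines $c$ uniquely given $\pi_c = \pi$.

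The main obstacle is the sheer quantity of pairwise distance constraints to verify, but once the block-and-spacer layout is recognised almost all of them are slack by a full factor of $M$. The genuinely delicate identity is $d(a_i, b_i) = 4nM$, whose tightness is precisely what couples the two halves of the instance and what forces the choice $s = (8n-1)M$ to be the correct one; verifying it exploits the cancellation of $w_j$ and $m - w_j$ between the matched $a$- and $b$-blocks, and it is this single equality that drives the whole argument.
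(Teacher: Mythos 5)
Your overall route is the same as the paper's: take the canonical coloring obtained by making every consecutive gap along the Claim~5 sequence tight, verify properness, compute the block sums so that the total $v_L$--$v_R$ distance is exactly $s-1$ and the truncation at $v_M=w_n$ gives $\p{4n-1}M + w\of{M_\pi} = l + w\of{M_\pi}$, and get uniqueness from the fact that the sum of the minimum gaps already saturates the span. Those computations are correct and match the paper's.

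However, the one step you single out as the ``genuinely delicate identity'' is argued incorrectly. It is not true that $c\of{b_i} - c\of{a_i} = 4nM$ ``on the nose.'' Writing $p = \pi^{-1}\of{i}$, the segment between $a_i$ and $b_i$ consists of the tail of the $a$-side (the $a$-blocks with indices $q > p$, contributing $+\,w\of{\vv{1}{\pi\of{q}}, \vv{2}{q}}$) and the head of the $b$-side (the $b$-blocks with indices $q < p$, contributing $m - w\of{\vv{1}{\pi\of{q}}, \vv{2}{q}}$), so the weight terms do \emph{not} pair up and cancel; one gets
$c\of{b_i} - c\of{a_i} = 4nM + \sum_{q \ge p} w\of{\vv{1}{\pi\of{q}}, \vv{2}{q}} + \p{p-1}m - \sum_{q < p} w\of{\vv{1}{\pi\of{q}}, \vv{2}{q}}$,
which exceeds $4nM$ whenever the matching weights are not extremal (already for $n=1$ it is $4M + w\of{\vv{1}{1},\vv{2}{1}}$). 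Fortunately only the inequality $\abs{c\of{a_i} - c\of{b_i}} \ge 4nM$ is needed for properness, and it follows immediately (as in the paper) from the crude bound $M + \p{2n-1}\cdot 2M + M$, using that $a_i$ and $b_i$ are separated by $2n-1$ skeleton gaps of width at least $2M$ plus one gap of width at least $M$ on each side; since each $w\of{\cdot,\cdot}\le m$, the surplus above is nonnegative. So the result and the rest of your argument stand, but this verification must be restated as an inequality rather than a telescoping equality, and the narrative that this constraint is tight and ``drives'' the choice of $s$ should be dropped — the only genuinely tight global quantity is the $v_L$--$v_R$ distance $s-1$, which is what your saturation argument for uniqueness (correctly) uses. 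A smaller imprecision of the same kind: constraints such as $d\of{a_i, v_j}$ for $j > 2n$ are not ``slack by a full factor of $M$'' in general (for $\pi^{-1}\of{i} = n$ the slack is only the weight terms), though they do hold.
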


\begin{proof}
{\bf Claim 6}
Let $\pi:\el{n}\to\el{n}$ be any permutation and
$M_{\pi} = \set{\vv{1}{i} \vv{2}{\pi\of{i}} : i\in\el{n}}$ be 
the corresponding perfect matching in $G.$
There is exactly one YES-coloring $c$ such that
$\pi_c = \pi.$
Moreover $c(v_M) = c(v_L) + l + w(M_{\pi}).$

{\em Proof of the claim:}
Let us consider a sequence of the vertices
$$v_1, a_{\pi\of{1}}, v_2, w_1, v_3, a_{\pi\of{2}}, v_4, w_2, v_5 \ldots,
  v_{2n - 1}, a_{\pi\of{n}}, v_{2n}, w_n,$$ $$
  v_{2n + 1}, b_{\pi\of{1}}, v_{2n + 2}, w_{n + 1},
  v_{2n + 3}, b_{\pi\of{2}}, v_{2n + 4}, w_{n + 2}, v_{2n + 5}\ldots
  v_{4n - 1}, b_{\pi\of{n}}, v_{4n}$$
and the coloring $c$ implied by the minimum distances of pairs of consecutive
elements in this sequence, i.e.,
$c\of{v_1} = 1,$
$c\of{a_{\pi\of{1}}} = c\of{v_1} + d\of{v_1, a_{\pi\of{1}}},$
$c\of{v_2} = c\of{a_{\pi\of{1}}} + d\of{a_{\pi\of{1}}, v_2},$
$c\of{w_1} = c\of{v_2} + d\of{v_2, w_1},$
$c\of{v_3} = c\of{w_1} + d\of{w_1, v_3},$
\ldots,
$c\of{v_{\maxv}} = c\of{b_{\pi\of{n}}} + d\of{b_{\pi\of{n}}, v_{\maxv}}.$
We need to check that all the minimum distance constraints $d$ are satisfied
and that the span of this coloring is nor greater than $s.$

Note that for every $i\in\elv$ and for every vertex $x\in V$ such that
$v_i \ne x$ we have $d(x, v_i) \ge M.$
Therefore for every $i\in\el{\maxv - 1}$ we have
$c(v_{i + 1}) - c(v_i) = (c(v_{i + 1}) - c(x)) + (c(x) - c(v_i))
  = d(x, v_{i + 1}) + d(v_i, x) \ge 2M$
where $x$ is the vertex separating $v_i$ and $v_{i + 1}$ in the sequence.
Thus for every $i, j \in \elv$ we have
$|c(v_i) - c(v_j)| \ge |i - j| \cdot 2M$
so if $\{i, j\} \ne \{1, \maxv\}$ then $|c(v_i) - c(v_j)| \ge d(v_i, v_j).$
Hence also for every $i\in\elw$ and $j\in\elv$ we have
$|c(w_i) - c(v_j)| = |c(w_i) - c(v_k)| + |c(v_k) - c(v_j)|
\ge M + |k - j| \cdot 2M$
where in case that $j \le 2i$ we have $k = 2i$ and in this case
$M + |k - j| \cdot 2M = |4i -2j + 1| \cdot M$
and in case that $j > 2i$ we have $k = 2i + 1$ and in this case
$M + |k - j| \cdot 2M = |2j - 4i - 1| \cdot M = |4i -2j + 1| \cdot M$
so in both cases
$|c(w_i) - c(v_j)| \ge |4i - 2j + 1| \cdot M = d(w_i, v_j).$
We will check the distance between $v_L = v_1$ and $v_R = v_{\maxv}$ later.

For every $i\in\ela$ and vertex $a_{\pi(i)}$ the closest vertex $v_j$
to the left is $v_{2i - 1}$ and to the right is $v_{2i}.$
They are immediate neighbours of $a_{\pi(i)}$ in the sequence so
from the definition of $c$ we have
$|c(a_{\pi(i)}) - c(v_{2i - 1})| = d(v_{2i - 1}, a_{\pi(i)})$
and
$|c(v_{2i}) - c(a_{\pi(i)})| = d(a_{\pi(i)}, v_{2i}).$
Note that for every $j\in\el{2n}$ we have $d(a_{\pi(i)}, v_j)\le 2M$
and then for every $j\in\el{2i - 2}$ we have
$ |c(a_{\pi(i)}) - c(v_j)|
= (c(v_{2i - 1}) - c(v_j)) + (c(a_{\pi(i)}) - c(v_{2i - 1}))
\ge 2M + M
> d(a_{\pi(i)}, v_{2i - 1}).$
Similarly for every $2i + 1 \leq j \leq 2n$ we have
$ |c(a_{\pi(i)}) - c(v_j)|
= ((c(v_{2i}) - c(a_{\pi(i)})) + (c(v_j) - c(v_{2i}))
\ge M + 2M
> d(a_{\pi(i)}, v_j).$
For every $2n + 1 \leq j \leq \maxv$
we have
$ |c(v_j) - c(a_{\pi(i)})|
  = (c(v_{2i}) - c(a_{\pi(i)})) + (c(v_{2n}) - c(v_{2i}))
      + (c(v_j) - c(v_{2n}))
  \ge M + 0 + (j - 2n) \cdot 2M
  = d(a_{\pi(i)}, v_j).$
Because $\pi$ is a permutation thus we obtain that
for every $i\in\ela$ and for every $j\in\elv$ we have
$|c(a_i) - c(v_j)| \ge d(a_i, v_j).$

For every $i\in\ela$ and $j\in\elw$ there is at least one vertex $v_k$
with color between the colors $c(a_i)$ and $c(w_j)$ so
$ |c(a_i) - c(w_j)|
= |c(a_i) - c(v_k)| + |c(v_k) - c(w_j)|
\ge 2M
= d(a_i, w_j).$
For every $i, j \in\ela$ such that $\pi^{-1}(i) < \pi^{-1}(j)$
there are at least two vertices
$v_k, v_{k + 1}$ with colors
$c(a_i) \leq c(v_k) \leq c(v_{k + 1}) \leq c(a_j).$
Therefore
$ |c(a_j) - c(a_i)|
= (c(v_k) - c(a_i)) + (c(v_{k + 1}) - c(v_k)) + (c(a_j) - c(v_{k + 1}))
\ge M + 2M + M = 4M = d(a_i, a_j).$

Similarly we can check that for every $i\in\elb$ and $j\in\elv$ we have
$|c(b_i) - c(v_j)| \ge d(b_i, v_j),$
that for every $i\in\elb$ and $j\in\elw$ we have
$|c(b_i) - c(w_j)| \ge d(b_i, w_j)$
and for every $i,j\in\elb$ such that $i\ne b$ we have
$|c(b_i) - c(b_j)| \ge d(b_i, b_j).$

We need also to check that for every $i\in\ela$ we have
$|c(a_i) - c(b_i)| \ge n\cdot 4M = d(a_i, b_i).$
Indeed
$|c(b_i) - c(a_i)|
= (c(v_{2i}) - c(a_i)) + (c(v_{2n + 2i - 1}) - c(v_{2i}))
  + (c(b_i) - c(v_{2n + 2i - 1}))
\ge M + (2n - 1) \cdot 2M + M
= n\cdot 4M
= d(a_i, b_i).$

Now we are going to deal with the distances between $v_L, v_M$ and $v_R.$
The sum of the minimum color distances of neighbouring elements
in the prefix of our sequence:
$$v_1, a_{\pi\of{1}}, v_2, w_1, v_3, a_{\pi\of{2}}, v_4, w_2, v_5 \ldots,
  v_{2n - 1}, a_{\pi\of{n}}, v_{2n}, w_n, v_{2n + 1}$$
is exactly $2n \cdot 2M + w\of{M_{\pi}}.$
The sum of the minimum color distances of neighbouring elements
in the sufix of our sequence:
$$v_{2n + 1}, b_{\pi\of{1}}, v_{2n + 2}, w_{n + 1},
  v_{2n + 3}, b_{\pi\of{2}}, v_{2n + 4}, w_{n + 2}, v_{2n + 5}\ldots
  v_{4n - 1}, b_{\pi\of{n}}, v_{4n}$$
is exactly $\p{2n - 1} \cdot 2M + n\cdot m - w\of{M_{\pi}}.$
So the total sum for the whole sequence is
exactly $\p{4n - 1} \cdot 2M + n \cdot m = s - 1$
and it does not depend on the permutation $\pi.$
Therefore $|c(v_R) - c(v_L)| = s - 1 = d(v_L, v_R).$
This was the last constraint to check
and hence we have shown that $c$ is proper.
On the other hand the $\myspan$ of $c$ is $s$ so $c$ is a YES-coloring.
Moreover we have
$c\of{v_M} = c\of{v_L} + \p{4n - 1} \cdot M + w\of{M_{\pi_c}}
  = c\of{v_L} + l + w\of{M_{\pi_c}}.$
Note that all the distances of pairs of consecutive elements of (the whole)
sequence are tight, i.e.,
these distances are equal to the minimum allowed distances for these pairs
of the vertices and therefore we cannot decrease any of these distances.
On the other hand the $\myspan$ of $c$ is maximum so we cannot increase any
of these distances without exceeding the maximum span or
violating some of the constraints provided by $d$.
Therefore $c$ is the only one YES-coloring for which the colors of the
vertices of this sequence are increasing.
Hence $c$ is the only one YES-coloring such that $\pi_c = \pi.$
This ends the proof of the claim.
\end{proof}

\begin{lemma}[Lemma~\ref{lem:matchings-to-ca}]
Let $G = (V_1 \cup V_2, E, w)$ be a weighted full bipartite graph
with nonnegative weights and such that $\pow{V_1} = \pow{V_2}.$
Let $n = \pow{V_1},$
$m = \max_{e\in E} w(e),$
$M = n \cdot m + 1,$
$l = \p{4n - 1} \cdot M$ and
$s = (8n - 1) \cdot M.$
There exists a $(v_L, v_R)$-spanned instance $I=(V, d, s)$ of \CA
with $\pow{V} = O\of{n}$ and
such that for some vertex $v_M \in V,$
\begin{enumerate}[(i)]
  \item for every YES-coloring $c$ of $I$
    such that $c\of{v_L} \leq c\of{v_R}$
    there exists a perfect matching $M_G$ in $G$ such that
    $c\of{v_M} = c\of{v_L} + l + w\of{M_G}$ and
  \item for every perfect matching $M_G$ in $G$ there exists a YES-coloring
    $c$ of $I$ such that $c\of{v_L} \leq \of{v_R}$ and
    $c\of{v_M} = c\of{v_L} + l + w\of{M_G}.$
\end{enumerate}
\end{lemma}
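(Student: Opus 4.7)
The plan is to build the instance around a chain of $4n$ anchor vertices $v_L = v_1, v_2, \ldots, v_{4n} = v_R$ whose colors in any YES-coloring must be strictly increasing with nearly uniform gaps of size $2M$. I would declare $d(v_L, v_R) = s - 1$ (making the instance $(v_L, v_R)$-spanned and pinning the total span) and $d(v_i, v_j) = |i - j| \cdot 2M$ for all other pairs. Then, since the total span is forced to equal $s - 1 = (8n - 1) M$, a pigeonhole on the $4n - 1$ consecutive gaps shows every gap $c(v_{i+1}) - c(v_i)$ lies in the window $[2M,\, 2M + n m]$. This per-gap slack of at most $nm < M$ is precisely what will encode edge weights.

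Next I would stuff auxiliary vertices into these gaps. Introduce $a_1, \ldots, a_n$ with $d(a_i, a_j) = 4M$ and carefully chosen distances to the anchors so that each $a_i$ is forced into exactly one of the first $n$ ``wide'' gaps $(c(v_{2j-1}), c(v_{2j}))$. The key choice is $d(a_i, v_{2j}) = M + w(v^{(1)}_i, v^{(2)}_j)$ (with a baseline $M$ to the left anchor), so that when $a_{\pi(j)}$ lands in the $j$-th slot, the gap grows by exactly $w(v^{(1)}_{\pi(j)}, v^{(2)}_j)$. Summed over $j$, the left half of the chain accrues $w(M_\pi) = \sum_j w(v^{(1)}_{\pi(j)}, v^{(2)}_j)$ of extra span beyond the baseline $2 n \cdot 2M$. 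Symmetrically I would add $b_1, \ldots, b_n$ placed in the last $n$ wide gaps with $d(b_i, v_{2n + 2j}) = M + m - w(v^{(1)}_i, v^{(2)}_j)$, so that the right half accrues the complementary $\sum_j (m - w(v^{(1)}_{\rho(j)}, v^{(2)}_j))$ for some permutation $\rho$. Finally I would place ``filler'' vertices $w_1, \ldots, w_{2n-1}$ in the remaining $2n - 1$ ``narrow'' gaps, setting $v_M := w_n$ at a fixed offset $M$ above $v_{2n}$, so that $c(v_M) - c(v_L)$ is exactly the baseline plus the weight contribution of the left half.

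The two sides must agree on the \emph{same} permutation; to enforce this I would add long-range constraints $d(a_i, b_i) = 4 n M$ for each $i$. The argument is that if $\pi^{-1}(i) > \rho^{-1}(i)$ for some $i$, then $a_i$ and $b_i$ lie in slots separated by at most $2n - 1$ anchor gaps, so their color distance is bounded by roughly $(2n-1) \cdot (2M) + nm < 4nM$, contradicting the prescribed distance. Since both maps are permutations, $\pi^{-1}(i) \le \rho^{-1}(i)$ for every $i$ forces $\pi = \rho$. Direction (i) then follows by reading off $c(v_M) = c(v_L) + l + w(M_\pi)$ from the accumulated-gap formula.

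For direction (ii), given a matching $M_\pi$ I would greedily color the interleaved sequence $v_1, a_{\pi(1)}, v_2, w_1, v_3, a_{\pi(2)}, v_4, \ldots, v_{4n}$ by setting every consecutive gap equal to its prescribed minimum distance $d$, and check that the total telescopes to exactly $s - 1$ (the $+w(M_\pi)$ contribution from the $a$-side cancels against $-w(M_\pi)$ from the $b$-side, leaving the constant $(4n - 1) \cdot 2M + n m = s - 1$). The main obstacle will be the tedious case analysis verifying that \emph{every} pairwise distance constraint $d(x, y)$ in the instance is respected by this explicit coloring, not only those between consecutive vertices in the sequence: for each non-adjacent pair (two anchors, an $a$ and a distant anchor, the $a$--$b$ coupling, etc.) one must track how the color gap accumulates along the sequence and confirm it meets or exceeds the prescribed lower bound. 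This is routine once one has the bookkeeping lemma that each consecutive gap is in $[2M, 3M)$, and the final vertex count $4n + (2n-1) + 2n = O(n)$ is immediate from the construction.
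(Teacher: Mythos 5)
Your construction is essentially the paper's: the same chain of $4n$ anchors with $d(v_L,v_R)=s-1$ and $d(v_i,v_j)=|i-j|\cdot 2M$, the same per-gap window $[2M,\,2M+nm]$ forced by the pinned span, $a$-vertices contributing $M+w(v^{(1)}_i,v^{(2)}_j)$ on the left half, $b$-vertices contributing $M+m-w(v^{(1)}_i,v^{(2)}_j)$ on the right half, fillers $w_1,\dots,w_{2n-1}$ with $v_M=w_n$, the coupling $d(a_i,b_i)=4nM$ to force the two permutations to coincide (your slack bound $(2n-1)\cdot 2M+nm<4nM$ is exactly the right computation), and the telescoping/tightness argument giving both (i) and (ii).

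There is, however, one step that fails as you state it: you claim that ``carefully chosen distances to the anchors'' force each $a_i$ into one of the first $n$ wide gaps. Anchor distances alone cannot do this once $n\ge 2$. Since every permutation must remain realizable within gaps of size less than $3M$, you are forced to take $d(a_i,v_{2k-1})\approx M$ and $d(a_i,v_{2k})\approx M+w(v^{(1)}_i,v^{(2)}_k)$ on the left half; but then an $a_i$ can instead sit inside a narrow gap $\bigl(c(v_{2j}),c(v_{2j+1})\bigr)$ alongside the filler $w_j$, because the room it needs there, at most $2M+m\le 2M+nm$, fits the admissible window, and the budget freed by the wide gap that loses its $a$ pays for the stretch. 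Such a YES-coloring has no permutation readout and the value of $c(v_M)$ no longer equals $c(v_L)+l+w(M_G)$ for any perfect matching, so property (i) breaks. The paper closes exactly this hole with the additional constraints $d(a_i,w_j)=d(b_i,w_j)=2M$ (and it also makes explicit the growing distances $d(a_i,v_j)=(j-2n)\cdot 2M+M$ for $j>2n$, and symmetrically for the $b$'s, which your ``baseline'' description leaves implicit and which are needed to confine $a$'s and $b$'s to their own halves). With these constraint classes added, your sketch coincides with the paper's proof and goes through.
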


\begin{proof}
Let use Lemma~\ref{lem:matchings-to-ca} on graph $G_1$
to obtain a $\p{\vv{1}{L}, \vv{1}{R}}$-spanned
\CA instance $I_1 = (V_1, d_1, s_1)$ with
$l_1 = O\of{n_1^2 m_1},$
$s_1 = 2l_1 + n_1 \cdot m_1 = O\of{n_1^2 m_1}$
and with the vertex $\vv{1}{M}$
(as in the statement of Lemma~\ref{lem:matchings-to-ca}).
The number of the vertices in $V_1$ is $O\of{n_1}.$

Similarly, let $I_2 = (V_2, d_2, s_2)$ be a
$\p{\vv{2}{L}, \vv{2}{R}}$-spanned \CA instance with
$l_2 = \O\of{n_2^2 m_2},$
$s_2 = 2l_2 + n_2 \cdot m_2 = O\of{n_2^2 m_2}$
and with the vertex $\vv{2}{M}$
obtained from Lemma~\ref{lem:matchings-to-ca} from graph $G_2.$
The number of the vertices in $V_2$ is $O\of{n_2}.$

Let us identify vertices $\vv{1}{M}$ and $\vv{2}{M},$ i.e.,
$\vv{1}{M} = \vv{2}{M} = v_M$ and
$V_1 \cap V_2 = \set{v_M}.$
And let $l_{\max} = \max\set{l_1, l_2} = O\of{n_1^2 m_1 + n_2^2 m_2}$ and
$s = l_{\max} + \max\set{s_1 - l_1, s_2 - l_2} = O\of{n_1^2 m_1 + n_2^2 m_2}.$

Our span will be $s$.
Note that then every edge with a weight greater than $s - 1$ forces
that our instance is a NO-instance.
So if we have an edge with a weight greater that $s$ we can replace it
with the same edge with but a weight equal to $s$ and the istance will be still
a NO-instance.
Therefore weights of
all our edges will be bounded by $O\of{n_1^2 m_1 + n_2^2 m_2}.$

We can use Lemma~\ref{lem:ca-extend} with $l = l_{\max} - l_1$
and with $r = s - \p{l_{\max} + s_1 - l_1}$
for extending the instance $I_1$
into a $\p{\ww{1}{L}, \ww{1}{R}}$-spanned
instance $I_1'\of{V_1' = V_1 \cup \set{\ww{1}{L}, \ww{1}{R}}, d_1', s}$ of \CA.

For every YES-coloring $c_1'$ of $I_1'$ we know that $c_1'\mid_{V_1}$
is a YES-coloring of $I_1$ and for every YES-coloring $c_1$ of $I_1$
there exists a YES-coloring $c_1'$ of $I_1'$ such that $c_1'\mid_{V_1} = c_1$
so from the properties of $I_1$
(obtained from Lemma~\ref{lem:matchings-to-ca})
we know that
\begin{itemize}
  \item for every YES-coloring $c_1'$ of $I_1'$
    such that $c_1'\of{\ww{1}{L}} \leq c_1'\of{\ww{1}{R}}$
    there exists a perfect matching $M_1$ in $G_1$ such that
    $c_1'\of{v_M} = c_1'\of{\ww{1}{L}} + l_{\max} + w_1\of{M_1}$ and
  \item for every perfect matching $M_1$ in $G_1$
    there exists a YES-coloring $c_1'$ of $I_1'$ such that
    $c_1'\of{\ww{1}{L}} \leq \of{\ww{1}{R}}$ and
    $c_1'\of{v_M} = c_1'\of{\ww{1}{L}} + l_{\max} + w_1\of{M_1}.$
\end{itemize}

Similarly we can use Lemma~\ref{lem:ca-extend} with $l = l_{\max} - l_2$
and with $r = s - \p{l_{\max} - l_2 + s_2}$
for extending the instance $I_2$
into a $\p{\ww{2}{L}, \ww{2}{R}}$-spanned
instance $I_2'\of{V_2' = V_2 \cup \set{\ww{2}{L}, \ww{2}{R}}, d_2', s}$ of \CA
such that
\begin{itemize}
  \item for every YES-coloring $c_2'$ of $I_2'$
    such that $c_2'\of{\ww{2}{L}} \leq c_2'\of{\ww{2}{R}}$
    there exists a perfect matching $M_2$ in $G_2$ such that
    $c_2'\of{v_M} = c\of{\ww{2}{L}} + l_{\max} + w\of{M_2}$ and
  \item for every perfect matching $M_2$ in $G_2$
    there exists a YES-coloring $c_2'$ of $I_2'$ such that
    $c_2'\of{\ww{2}{L}} \leq \of{\ww{2}{R}}$ and
    $c_2'\of{v_M} = c_1'\of{\ww{2}{L}} + l_{\max} + w_1\of{M_1}.$
\end{itemize}

Now we can use Lemma~\ref{lem:ca-merge} to merge the instances $I_1'$
and $I_2'$ into a one
$\p{\set{\ww{1}{L}, \ww{2}{L}}, \set{\ww{1}{R}, \ww{2}{R}}}$-spanned
instance $I'=\of{V_1' \cup V_2', d, s}.$
A simplified picture of the obtained instance can be found in
Figure~\ref{fig:matchings-to-ca}.
Note that $V_1' \cap V_2' = \set{v_M}$ so
\begin{itemize}
  \item for every YES-coloring $c$ of $I'$ such that
    $c\of{\ww{1}{L}} \leq c\of{\ww{1}{R}}$ there exist perfect matchings
    $M_1$ in $G_1$ and $M_2$ in $G_2$ such that
    $c\of{v_M} = c\of{\ww{1}{L}} + l_{\max} + w_1\of{M_1}
      = c\of{\ww{1}{L}} + l_{\max} + w_2\of{M_2},$
    so $w_1\of{M_1} = w_2\of{M_2}$ and
  \item for every two perfect matchings $M_1$ in $G_1$
    and $M_{G_2}$ in $G_2$ such that $w_1\of{M_1} = w_2\of{M_2}$
    there is a YES-coloring $c$ of $I$ such that
    $c\of{\ww{1}{L}} \leq c\of{\ww{1}{R}}$ and
    $c\of{v_M} = c\of{\ww{1}{L}} + l_{\max} + w_1\of{M_1}.$
\end{itemize}

Therefore the \CA instance $I'$ has a YES-coloring if and only if there are
two perfect matchings $M_1$ in $G_1$ and $M_2$ in $G_2$
such that $w_1\of{M_1} = w_2\of{M_2}.$
By Lemma~\ref{lem:ca-merge} and Lemma~\ref{lem:ca-extend}
we know that $I'$ has $O\of{n_1 + n_2}$ vertices.
\end{proof}

\begin{corollary}[Corollary~\ref{cor:ca-nloglogl}]
There is no algorithm solving \CA in
$2^{n \cdot o(\log\log\ell)} \poly{r}$
where $n$ is the number of the vertices
and $r$ is the bit size of the instance
unless ETH fails.
\end{corollary}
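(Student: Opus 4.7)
The plan is to recycle the reduction chain of Corollary~\ref{cor:CMW-hardness} and of the immediately preceding corollary, but to track the relationship between the vertex count and the maximum edge weight $\ell$ in the produced \CA instance more carefully. The crucial observation is that the instances built by Lemma~\ref{lem:CMW-to-CA} from a compressed \CMW instance have weight $\ell$ doubly exponential in their vertex count, so that $\log\log\ell = \Theta(\log n)$; consequently an $n\cdot o(\log\log\ell)$ exponent is really an $o(n\log n)$ exponent in disguise, and the bound falls out from Corollary~\ref{cor:ThreeSAT-hardness} via the same sequence of reductions.

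Concretely, given a \ThreeSAT formula with $n$ variables and $m$ clauses, write $N = n + m$. Theorem~\ref{thm:ThreeSAT-to-CMW} produces a \CMW instance of total vertex count $O(N/\log N)$ with matching weights bounded by $2^{3m}$. Feeding this into Lemma~\ref{lem:CMW-to-CA} yields a \CA instance with $n' = O(N/\log N)$ vertices, maximum edge weight $\ell = O((n')^2 \cdot 2^{3m}) = 2^{O(N)}$, and bit size $r = \poly{N}$. For such an instance $\log\ell = O(N)$ and therefore $\log\log\ell = O(\log N)$.

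Now suppose, towards a contradiction, that \CA admits a $2^{n \cdot o(\log\log\ell)} \poly{r}$-time algorithm. Running it on the constructed instance takes time
\[
  2^{O(N/\log N) \cdot o(\log N)} \cdot \poly{N} \;=\; 2^{o(N)} \cdot \poly{N},
\]
where we used the elementary identity $(N/\log N) \cdot o(\log N) = o(N)$. This contradicts Corollary~\ref{cor:ThreeSAT-hardness} and finishes the proof.

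The only risk in the argument is bookkeeping: one must verify that both the factor $n'$ and the factor $\log\log\ell$ in the exponent really are bounded as claimed for the constructed instance, and that the polynomial factor $\poly{r}$ is absorbed into the $2^{o(N)}$ term. The rest is arithmetic of the same flavour as in the proof of Corollary~\ref{cor:CMW-hardness}, with the single key identity being $\log\log\ell = \Theta(\log n')$, i.e., the fact that the reduction makes the edge weights doubly exponential in the vertex count; this is exactly what converts the previous $2^{o(n\log n)}$ lower bound into the slightly sharper $2^{n\cdot o(\log\log\ell)}$ lower bound stated in the corollary.
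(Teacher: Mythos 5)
Your proposal is correct and follows essentially the same route as the paper: apply Theorem~\ref{thm:ThreeSAT-to-CMW} followed by Lemma~\ref{lem:CMW-to-CA}, observe that the resulting \CA instance has $n' = O\of{\frac{n+m}{\log(n+m)}}$ vertices with $\log\ell = O(n+m)$ and $r = \poly{n+m}$, and conclude $2^{n'\cdot o(\log\log\ell)}\poly{r} = 2^{o(n+m)}$, contradicting Corollary~\ref{cor:ThreeSAT-hardness}. (Only your informal framing that $\ell$ is ``doubly exponential in the vertex count'' is imprecise --- it is $2^{\Theta(n'\log n')}$ --- but the step actually used, $\log\log\ell = O(\log(n+m))$, is exactly what the paper uses.)
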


\begin{proof}
For a given instance of \ThreeSAT with $n$ variables and $m$ clauses we
use the reduction from Theorem~\ref{thm:ThreeSAT-to-CMW} to obtain an instance
of \CMW with $|V_1| = O\of{\ndivlog{n}},$ $|V_2| = O\of{\ndivlog{m}}$ 
and the maximum matching weights bounded by $2^{3m}.$

Then we use the reduction from Lemma~\ref{lem:CMW-to-CA} to obtain an instance
of \CA with
\newcommand{\nprime}{\frac{n + m}{\log(n + m)}}
$$n'
= O\of{\ndivlog{n} + \ndivlog{m}}
= O\of{\nprime}$$
vertices and the weights
on the edges bounded by
$$\ell = O\of{\p{\ndivlog{n} + \ndivlog{m}}^2 \cdot 2^{3m}}.$$
Then
$\log\ell = O(n + m)$ and
$r = O((n')^2 \cdot \log\ell) = O((n + m)^3).$

Let us assume that there is an algorithm solving \CA in 
$2^{n \cdot o(\log\log\ell)} \poly{r}$-time
then we can solve our instance in time
$$ 2^{O\of{\nprime} \cdot o(\log O(n + m))} \poly{O((n + m)^3)}
= 2^{O\of{\nprime} \cdot o(\log(n + m))} \poly{n + m}
= 2^{o(n + m)}$$
which contradicts ETH by Corollary~\ref{cor:ThreeSAT-hardness}.
\end{proof}

\end{document}